\theoremstyle{definition}
\theoremstyle{plain}
\newtheorem{theorem}{Theorem}[section]
\newtheorem{proposition}[theorem]{Proposition}
\theoremstyle{remark}
\newtheorem{example}[theorem]{Example}
\newtheorem{remark}[theorem]{Remark}
\newcommand{\defn}{\coloneqq} 
\newcommand{\reals}{\mathbb{R}} 
\newcommand{\com}{\mathbb{C}} 
\newcommand{\sphere}{\mathbb{S}}
\newcommand{\eps}{\epsilon}
\newcommand{\Iddd}{I_{d \times d}}
\newcommand{\gauss}{\mathcal{N}} 
\newcommand{\like}{\mathcal{L}}
\newcommand{\order}{\mathcal{O}}
\newcommand{\re}{\mathfrak{Re}}
\newcommand{\logit}{\mathrm{logit}}
\newcommand{\hyperc}{\mathbb{H}}
\newcommand{\fisher}{\mathcal{I}}
\newcommand{\showfontsize}{\f@size{} pt}
\newcommand{\ibar}{%
  \text{\ooalign{\hidewidth -\kern-.1em-\hidewidth\cr$i$\cr}}%
}
\DeclareMathOperator{\Law}{Law}
\DeclareMathOperator{\diag}{diag}
\DeclareMathOperator*{\expv}{\mathbb{E}} 
\newcolumntype{+}{>{\global\let\currentrowstyle\relax}}
\newcolumntype{^}{>{\currentrowstyle}}
\begin{document}


\title{Efficient Bayesian Sampling with Langevin Birth-Death Dynamics}

\author[1]{Alex Leviyev}
\author[2]{Francesco Iacovelli}
\author[1]{Aaron Zimmerman}
\affil[1]{Center for Gravitational Physics, University of Texas at Austin}
\affil[2]{Department of Physics and Astronomy, Johns Hopkins University}
\date{\today}
\maketitle

\begin{abstract}
Bayesian inference plays a central role in scientific and engineering applications by enabling principled reasoning under uncertainty. 
However, sampling from generic probability distributions remains a computationally demanding task. 
This difficulty is compounded when the distributions are ill-conditioned, multi-modal, or supported on topologically non-Euclidean spaces. 
Motivated by challenges in gravitational wave parameter estimation, we propose simulating a Langevin diffusion augmented with a birth-death process. 
The dynamics are rescaled with a simple preconditioner, and generalized to apply to the product spaces of a hypercube and hypertorus. 
Our method is first-order and embarrassingly parallel with respect to model evaluations, making it well-suited for algorithmic differentiation and modern hardware accelerators. 
We validate the algorithm on a suite of toy problems and successfully apply it to recover the parameters of GW150914---the first observed binary black hole merger. 
This approach addresses key limitations of traditional sampling methods, and introduces a template that can be used to design robust samplers in the future. 
Our code is available at \cite{repo}. 

\end{abstract}


\section{Introduction}

Bayesian inference provides a principled framework for estimating unknown quantities from observed data \cite{von2011bayesian,allmaras2013estimating}.
At its core, Bayesian inference involves sampling from a posterior distribution over model parameters constructed from data, prior information, and a forward model.
However, in many scientific applications, this posterior distribution exhibits pathological features that make sampling difficult.
One domain where these difficulties are particularly pronounced is gravitational wave (GW) parameter estimation \cite{bailes2021gravitational}. 
GW signals, emitted by astrophysical events such as binary black hole or neutron star mergers, are detected as faint perturbations by ground-based observatories like LIGO and Virgo \cite{aasi2015advanced,acernese2014advanced, krolak2021recent}. 
From these noisy signals, the task is to infer physical properties of the source---including the component masses, spins, sky location, and distance---by matching the observed data against theoretical waveform models \cite{Thrane_2019,Christensen2022}. 

Accurate inference of these parameters enables tests of general relativity \cite{yunes2013gravitational}, places constraints on the neutron star equation of state \cite{annala2018gravitational}, and provides insights into astrophysical processes (e.g, those that govern binary formation and evolution \cite{abbott2023population}).
However, the problem is computationally challenging due to the slow convergence of sampling algorithms, which require millions of expensive forward models queries.
Current analyses using state-of-the-art techniques often take several hours to days per event, even with large-scale computational resources.
As detector sensitivity increases and the detection rate grows, the computational burden is expected to become significantly more demanding.
This motivates the development of faster, more scalable inference methods that can keep pace with the growing volume of GW data.

GW parameter estimation is a source of many intriguing issues from an algorithm design perspective.
For instance, GW posteriors are supported on the product of three topological spaces.
Orientation parameters such as spins or sky location belong to the sphere $\sphere^2$, coordinates that are prior bounded belong to the hypercube $\hyperc^m$, and periodic coordinates such as signal phases belong to the hypertorus $\mathbb{T}^n$.
Moreover, the posteriors are of moderately high-dimension, and exhibit (pathological) features such as multi-modality \cite{roulet2022removing} and ill-conditioning.
Coupled together, these issues make estimating parameters from GW data a formidable and unique challenge.

A variety of sampling algorithms have been adopted for GW parameter estimation.
Nested sampling remains one of the most popular approaches \cite{skilling2006nested,veitch2015parameter,Ashton-2019}, with implementations such as \texttt{CPNest} \cite{cpnest}, \texttt{PyMultiNest} \cite{feroz2009multinest}, and the widely used \texttt{Dynesty} library \cite{Speagle-2020}.
Although robust, nested sampling is computationally expensive, especially for high-dimensional problems with pathological features.
Markov chain Monte-Carlo (MCMC) methods are also extensively used, coupled with advanced techniques like parallel tempering \cite{earl2005parallel}, affine-invariant ensemble sampling \cite{goodman2010ensemble}, and custom jump proposals \cite{veitch2015parameter, biwer2019pycbc}.
However, many MCMC algorithms are zeroth order, meaning they only query the forward model, and not its gradient.
Due to the recent ports of GW models to differentiable programming languages \cite{Iacovelli-2022,edwards2023rippledifferentiablehardwareacceleratedwaveforms}, sampling methods utilizing gradient information also are under active development \cite{wong2023fastgravitationalwaveparameter}.
However, these have only recently been applied to GW parameter estimation, and have not yet been widely adopted.
To date, a diffusion based sampling method robust to the issues presented above which can be efficiently implemented on modern hardware accelerators has not yet been developed.
Our goal in this paper is to propose such a method using Langevin diffusion.

In \cref{sec:ula}, we review Langevin dynamics, and the idea of \emph{ensemble} particle diffusions.
Ensemble diffusions offer many advantages in constrast to single particle methods, such as the ability to construct preconditioners and more robustly discover modes.
In \cref{sec:fisher}, we propose a simple way to calculate the optimal Fisher preconditioner introduced in \cite{titsias2024optimal} for ensemble methods.
We show that incorporating this preconditioner significantly improves convergence rates on ill-conditioned test problems. 
In \cref{sec:reparam} we thoroughly investigate the procedure of inheriting flows over $\hyperc^m$ from flows over $\reals^m$ via reparameterization.
Using the framework we develop, we provide guidelines on how to construct reparameterized diffusions with desirable behavior. 
We also illustrate that reparameterization couples non-trivially with well known techniques such as preconditioning and annealing.
Finally, in \cref{sec:birth-death}, we discuss our implementation of the birth-death process introduced in \cite{lu2019accelerating, PhysRevE.107.024141}, which is crucial for accurately reconstructing multi-modal distributions. 
All the tools introduced are generalized to be applicable on the hypercube, the hypertorus, and products thereof, and validated with experiments in \cref{sec:experiments}.
In \cref{sec:gw150914} we conclude with a proof of principle by estimating the parameters of GW150914 \cite{Abbott2016,PhysRevLett.116.241102}, the first detected binary black-hole coalesence.

\section{Background}

Let us consider the concrete application of parameter estimation. 
In this setting, one has a posterior $p : \chi \to \mathbb{R}^+$ supported on a (possibly non-Euclidean) space $\chi$, and the task is to draw a large quantity of independent and identically distributed (i.i.d.) samples from it. 
A large i.i.d. sample set from a probability distribution is practically equivalent to a complete characterization of $p$. 
Indeed, the samples provide information regarding the morphology of $p$, including the presence of correlations or multimodality. 
On the other hand, the samples can be used to construct well-behaved estimators of expectation values with respect to $p$ via the central limit theorem \cite{weinan2021applied}. 
Unfortunately, drawing i.i.d. samples is impossible in most modern applications, and can only be done for a handful of classical probability distributions. 
Hence, we must resort to approximate methods to draw representative samples.

\subsection{Ensemble Langevin Diffusion}
\label{sec:ula}
The \textit{Langevin diffusion} \cite{pavliotis2014stochastic} is a stochastic process $X_t$ defined over $\reals^d$ which obeys the following stochastic differential equation:
\begin{align}
dX_t = - \nabla V(x) dt + \sqrt{2} dB_t,
\end{align}
where $V:\reals^d \to \reals$ is the \textit{potential}, and $B_t$ is a standard Brownian motion.
The density corresponding to $X_t$, denoted by $\rho_t := \Law X_t$, in this case obeys the \textit{Fokker-Planck equation}:
\begin{align} 
  \label{eq:FPE}
\partial_t \rho_t = \nabla \cdot (\nabla \rho_t + \rho_t \nabla V),
\end{align}
and that under appropriate conditions on the potential that density $\rho_t \to e^{-V}$ as $t \to \infty$ \cite{pavliotis2014stochastic}, i.e, the density approaches the Gibbs distribution, also referred to as the \textit{target}.
This mechanism may be exploited to sample from probability distributions.
Suppose we set $V(x) = - \ln p(x)$, where $p$ is a Bayesian posterior.
Then the Langevin diffusion yields an evolution of $X_t$ such that $\rho_t \to p$ as $t \to \infty$.
Hence, asymptotically the evolution of $X_t$ should yield correlated samples from $p$.
This process may be simulated by a variety of discretization schemes, the simplest of which is known as the \textit{Euler-Maruyama} discretization:
\begin{align}
\label{eq:standard-langevin}
x_{l+1} \leftarrow x_l - \nabla V(x_l) \tau + \sqrt{2 \tau} \xi, \quad \xi \sim \gauss(0, \Iddd).
\end{align}
This discretization is also referred to as the unadjusted Langevin algorithm (ULA) \cite{chewi2025_logconcave_sampling}.\footnote{This is in contrast to the Metropolis adjusted Langevin algorithm (MALA), which incorporates a Metropolis filter \cite{robert2004metropolis}.}
One issue with \cref{eq:standard-langevin} is that the convergence of the dynamics is highly sensitive to the conditioning of the target.
To address this issue, curvature information can be incorporated into the dynamics to adjust for anisotropies \cite{ma2015completerecipestochasticgradient}:
\begin{align}
\label{eq:fisher-langevin}
x_{l+1} \leftarrow x_l - \mathcal{I}^{-1}\nabla V(x_l) \tau + \sqrt{2 \tau } \eta, \quad \eta \sim \gauss(0, \mathcal{I}^{-1}),
\end{align}
where $\mathcal{I} \in \reals^{d \times d}$ is a constant positive definite matrix specified in \cref{sec:fisher}. 
In \cref{app:hessian-reparam} we discuss the extension to quasi-Newton methods, which allow for preconditioners with spatial variation.

We may efficiently draw from the noise term in \cref{eq:fisher-langevin} with the help of the following
\begin{proposition}
Let $\xi \sim \mathcal{N}(0, I)$, $\mathcal{I} \in \reals^{d \times d}$ be a positive definite matrix, and $U$ is the upper triangular Cholesky decomposition of $\mathcal{I}$, so that $\mathcal{I} = U^\top U$.
Then $U^{-1} \eta \sim \mathcal{N}(0, \mathcal{I}^{-1}).$
\end{proposition}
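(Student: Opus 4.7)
The plan is to invoke the standard fact that affine transformations of Gaussian random vectors are Gaussian, and then simply compute the resulting mean and covariance. Note the statement appears to contain a typo: the random vector on the left should be $U^{-1}\xi$ (since $\eta$ is not introduced in the hypothesis), and I will proceed under that reading.

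First I would recall that if $\xi \sim \mathcal{N}(0, I)$ and $A \in \reals^{d \times d}$ is any (deterministic) matrix, then $A\xi$ is jointly Gaussian with mean $A \cdot 0 = 0$ and covariance $A I A^\top = A A^\top$. This is immediate from the characteristic function of a Gaussian, or equivalently from the linearity of expectation combined with closure of Gaussians under linear maps. Applying this with $A = U^{-1}$ (which is well-defined since $\mathcal{I}$ positive definite forces $U$ to be invertible), I get $U^{-1}\xi \sim \mathcal{N}(0, U^{-1} U^{-\top})$.

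Next I would identify $U^{-1} U^{-\top}$ with $\mathcal{I}^{-1}$. Using the Cholesky factorization $\mathcal{I} = U^\top U$, one inverts both sides to obtain
\begin{equation*}
\mathcal{I}^{-1} = (U^\top U)^{-1} = U^{-1} (U^\top)^{-1} = U^{-1} U^{-\top},
\end{equation*}
which matches the covariance computed in the previous step, giving $U^{-1}\xi \sim \mathcal{N}(0, \mathcal{I}^{-1})$.

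There is no substantive obstacle here; the only thing to be careful about is bookkeeping with transposes, specifically remembering the convention that the Cholesky factor used is upper triangular with $\mathcal{I} = U^\top U$ (rather than the more common lower-triangular convention $\mathcal{I} = L L^\top$). The practical content of the proposition, which is what the authors likely want to emphasize, is that once $U$ has been computed one can sample from $\mathcal{N}(0, \mathcal{I}^{-1})$ by drawing a standard Gaussian $\xi$ and solving the triangular system $U y = \xi$ in $\order(d^2)$ operations, avoiding an explicit inversion of $\mathcal{I}$.
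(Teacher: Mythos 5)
Your proof is correct and follows essentially the same route as the paper's: both arguments reduce to the linear-transformation rule for Gaussian covariances together with the identity $(U^\top U)^{-1} = U^{-1}(U^{-1})^\top$; the paper merely writes the chain of equalities starting from $\mathcal{N}(0,\mathcal{I}^{-1})$ rather than from $U^{-1}\xi$. You are also right that the statement's $\eta$ is a typo for the $\xi$ introduced in the hypothesis.
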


\begin{proof}
$\mathcal{N}(0, \mathcal{I}^{-1}) = \mathcal{N}(0, (U^\top U)^{-1})
= \mathcal{N}(0, U^{-1} (U^\top)^{-1})
= \mathcal{N}(0, U^{-1} (U^{-1})^{\top})
= U^{-1} \mathcal{N}(0, I).$
\end{proof}
Hence, the noise term can be calculated with a triangular solve after the decomposition is computed.
In the context of GW astronomy, $d \sim \mathcal{O}(10)$, and hence the decomposition is not prohibitive.
The computed decomposition may be recycled to solve for the drift term in \cref{eq:fisher-langevin} as well.

The ULA simulates a \textit{single} particle, however there are good reasons to consider the extension to the multi-particle (ensemble) case.  
Simulating an ensemble accrues minimal expense if one has access to parallel computing hardware.
The most costly aspect of simulating the dynamics, the evaluation of the gradient of the potential, can be accomplished in an embarassingly parallel fashion.
This opens the door to many other advantages that ensemble based methods enjoy, such as more robust exploration of the parameter space \cite{Syed2021,lu2019accelerating}, and the ability to construct preconditioners \cite{leimkuhler2018ensemble}.

\subsection{Optimal Fisher Preconditioning}
\label{sec:fisher}
Both gradient descent and Langevin dynamics are not affine invariant and are thus sensitive to linear scalings such as choice of units (e.g, meters vs kilometers).
This is an undesirable property in engineering and physics applications, as it is not clear a priori which scalings are best for numerics.
Similarly, strong correlations between parameters lead to directions in the energy landscape with vastly different length scales, posing a challenge to gradient based methods \cite[Chapter~2]{liquet2024mathematical}.
For these reasons an appropriate rescaling of the geometry can greatly accelerate convergence.

There are many strategies in the literature suggesting how to do this in practice, each prescribing a matrix field $A: \reals^d \to \reals^{d \times d}$ which adjusts the gradient and noise terms in the Langevin dynamics \cite{ma2015completerecipestochasticgradient}. 
For example, a simple choice is $A = \diag(\tau_1, \ldots, \tau_d)$, which implements a different time-step in each direction \cite{porter2014hamiltonian}.
On the other hand, $A(x) \approx (\nabla^2 V(x))^{-1}$ corresponds to quasi-Newton type diffusions \cite{martin2012stochastic}.
Recently \cite{titsias2024optimal} discovered that there exists a constant preconditioning matrix that maximizes the expected squared jump distance of the Langevin dynamics.
This result is interesting as it generalizes the intuition that a good preconditioning matrix estimates the covariance of the target, and can be trivially calculated in the context of ensemble based methods.

Let us define a cost function $J$, denoting the expected squared jump distance of the Langevin dynamics with timestep $\tau>0$ and preconditioning matrix $A \in \reals^{d \times d}$
\begin{equation}
  \label{eq:expected-jump-cost}
J(\tau, A) = \expv \norm{x_{l + 1} - x_l} ^ 2 = \norm{- \tau A \nabla V(x_l) + \sqrt{2 \tau A} \gauss(0, \Iddd)}^2.
\end{equation}
Then for a fixed timestep \cite{titsias2024optimal} showed that \cref{eq:expected-jump-cost} attains its maximal value for $A^* \propto \fisher^{-1}$, where
\begin{equation}
\fisher := \expv_{x \sim p} \nabla V(x)  \nabla V(x)^{\top}, 
\label{eq:fisher}
\end{equation}
is referred to as the \textit{Fisher covariance matrix}.
\begin{remark}
The name ``Fisher matrix'' is an overloaded term.
In the GW literature, the Fisher matrix is commonly taken to be the Gauss-Newton approximation of the Hessian of the Bayesian potential evaluated at the maximum a-posteriori point (see \cref{app:quasi} for details).
We refer to \cref{eq:fisher} as the Fisher matrix exclusively.
\end{remark}

\cite{titsias2024optimal} proposes calculating $\fisher$ using the history of a single particle.
In the ensemble case, we propose a simple alternative utilizing a Monte-Carlo approximation.
Observe that the Fisher matrix requires taking an expectation value with respect to the target $p$, which we are unable to calculate analytically.
However, as the diffusion evolves, we expect the samples of the ensemble to begin resembling draws from $p$.
Hence, we can approximate \cref{eq:fisher} with the following:
\begin{equation}
  \label{eq:fisher-approx}
  \mathcal{I} \approx \frac{1}{N} \sum_{n=1}^N \nabla V(x_n) \nabla V(x_n)^{\top} + \lambda \Iddd,
\end{equation}
where we have added a small damping term with $\lambda>0$ for numerical stability.
In the ensemble Langevin case, the gradients are reused to compute $\mathcal{I}$, and then used to adjust the drift and noise terms as discussed in \cref{sec:ula}.
The preconditioner defined in \cref{eq:fisher-approx} is likely to be useful in other popular ensemble methods approaches, such as Stein variational gradient descent \cite{liu2016stein,wang2019steinvariationalgradientdescent}.
Note that $\mathcal{I}$ will certaintly be outperformed by pointwise preconditioners.
However, \cref{eq:fisher-approx} is a convenient choice in this setting, and more sophisticated techniques may be employed as the need arises.

We emphasized in the beginning of \cref{sec:ula} that the Langevin dynamics is \textit{strictly} defined only over $\reals^d$.
However, our target application requires us to consider more general spaces.
In the following section we discuss how a flow over a hypercube may be inherited from a flow defined over $\reals^d$ given a suitable coordinate transformation.

\subsection{Reparameterization}
\label{sec:reparam}

A hypercube support space $\mathbb{H}^d := \bigtimes_{i=1}^d (a_i, b_i)$, where $a_i < b_i \in \reals$ appear naturally in many scientific and engineering applications \cite{hsieh2020mirroredlangevindynamics}.
This is a consequence of the fact that parameters are oftentimes constrained by definition or prior bounded.
However, many processes of interest (e.g, Langevin dynamics) are defined over $\mathbb{R}^d$, and are thus ill-defined in their standard forms over constrained spaces.
In this section we describe a way to inherit a flow over $\hyperc^d$ given a flow over $\reals^d$, and a suitable coordinate transformation $T: \mathbb{H}^d \rightarrow \mathbb{R}^d$.
We provide a recipe to design such transformations, and state a result which allows one to calculate the pushforward potential and its gradient in a simple way.

For simplicity we consider transformations $T: \hyperc^d \to \reals^d$ that act on each coordinate of a point $x \in \hyperc^d$ independently.
In other words, the map can be specified by $(T(x))_i = \hat{T}_i(x_i)$, where each component map $\hat{T}_i: [a_i, b_i] \rightarrow \mathbb{R}$ is responsible for pushing forward coordinate $1 \le i \le d$ independently of the other coordinates.
To specify these $\hat{T}_i$ coordinate maps, we will need an affine transformation $\sigma^{-1}_i: [a_i, b_i] \rightarrow [0,1]$ for each coordinate $i$ which is responsible for ``standardizing'' each bounded interval. 
This is accomplished by $\sigma_i(x) := (b_i - a_i) x + a_i$.
With this affine transformation in hand, we can construct the component transformations by selecting a map $q: [0,1] \rightarrow \mathbb{R}$, and forming the composition $q \circ \sigma_i^{-1} : [a_i, b_i] \to \reals$.
A convenient choice for $q$ is the \textit{quantile function} associated to a random variable taking values in $\mathbb{R}$, which is defined as the inverse of the cumulative distribution function (CDF) $F: \mathbb{R} \rightarrow [0,1]$ of that random variable.
The proposed recipe is summarized in the commutative diagram illustrated in \cref{fig:coordinate_transform}.

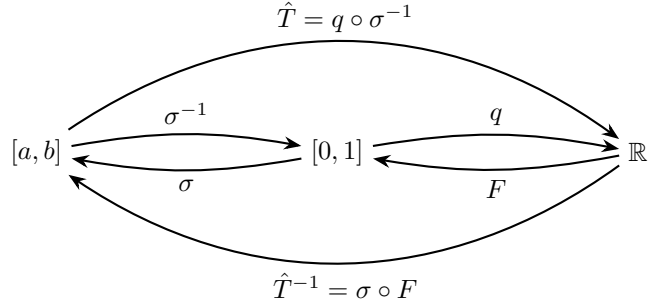
\begin{figure}[t]
    \centering
\begin{tikzpicture}[node distance=2cm, auto]

    \node (a) at (0,0) {$[a,b]$};
    \node (b) at (4,0) {$[0,1]$};
    \node (c) at (8,0) {$\mathbb{R}$};
  
    \draw[-{Stealth}, thick] (a) to[bend left=10] node[above] {$\sigma^{-1}$} (b);
    \draw[-{Stealth}, thick] (b) to[bend left=10] node[above] {$q$} (c);
    \draw[-{Stealth}, thick] (c) to[bend left=10] node[below] {$F$} (b);
    \draw[-{Stealth}, thick] (b) to[bend left=10] node[below] {$\sigma$} (a);
  
    \draw[-{Stealth}, thick, bend left=35] (a) to node[above] {$\hat{T} = q \circ \sigma^{-1}$} (c);
    \draw[-{Stealth}, thick, bend left=35] (c) to node[below] {$\hat{T}^{-1} = \sigma \circ F$} (a);
  
\end{tikzpicture}
\caption{Diagram showing the composition of maps defining $\hat{T}$ and its inverse over the interval $[a,b]$.}
\label{fig:coordinate_transform}
\end{figure}

We now state the main result.

\begin{theorem}    
  \label{thm:reparam}
Let $V$ be a potential with $\operatorname{supp} V = \mathbb{H}^d = \bigtimes_{i=1}^d [a_i, b_i]$. 
Let $\eta : \Omega \to \mathbb{R}$ be a continuous random variable with PDF $f$, and $T : \mathbb{H}^d \to \mathbb{R}^d$ be a coordinate transformation that is constructed via the above described procedure. 
Then the pushforward of the potential $V_{\#}: \reals^d \to \reals$ is given by
\begin{equation}
    V_{\#}(y) = V(T^{-1}(y)) + U(y) - \ln \operatorname{vol} \mathbb{H}^d,
\end{equation}
where $U(y):= \sum_{i=1}^d -\ln f(y_i)$.
Consequently the gradient of the pushforward potential is given by
\begin{equation}
    \partial_{y_j} V_{\#}(y) = \partial V(T^{-1}(y)) \, \Delta_j f(y_j) + \partial_{y_j} U(y),
\end{equation}
where $\Delta_j := b_j - a_j$.
\end{theorem}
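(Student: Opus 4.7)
The plan is to apply the standard change-of-variables formula for probability densities, exploiting the fact that $T$ (and hence $T^{-1}$) acts coordinatewise so that its Jacobian is diagonal. Concretely, I interpret $V$ as the negative log-density of (an unnormalized) measure $e^{-V}$ on $\mathbb{H}^d$, push this measure forward under $T$, and take $-\ln$ of the resulting density on $\reals^d$ to read off $V_{\#}$.

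First I would compute $T^{-1}$ explicitly. From the commutative diagram in \cref{fig:coordinate_transform}, $\hat{T}_i = q \circ \sigma_i^{-1}$ with $q = F^{-1}$, so $\hat{T}_i^{-1} = \sigma_i \circ F$ and $(T^{-1}(y))_i = \Delta_i F(y_i) + a_i$. Differentiating, the Jacobian $\nabla T^{-1}(y)$ is diagonal with entries $\Delta_i f(y_i)$, where $f = F'$. Hence
\begin{equation*}
\lvert \det \nabla T^{-1}(y) \rvert \;=\; \prod_{i=1}^d \Delta_i f(y_i) \;=\; (\operatorname{vol} \mathbb{H}^d)\,\prod_{i=1}^d f(y_i).
\end{equation*}
The change-of-variables formula then gives the pushforward density $p_{\#}(y) = e^{-V(T^{-1}(y))}\,\lvert \det \nabla T^{-1}(y) \rvert$, and taking $-\ln$ yields
\begin{equation*}
V_{\#}(y) \;=\; V(T^{-1}(y)) \;-\; \ln \lvert \det \nabla T^{-1}(y) \rvert \;=\; V(T^{-1}(y)) + U(y) - \ln \operatorname{vol} \mathbb{H}^d,
\end{equation*}
since $-\sum_i \ln f(y_i) = U(y)$ by definition. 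This establishes the first claim.

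The gradient formula then follows by direct differentiation. The constant $-\ln \operatorname{vol} \mathbb{H}^d$ drops out, and $U$ is already a separable sum, contributing $\partial_{y_j} U(y)$ to coordinate $j$. For the $V \circ T^{-1}$ term, the chain rule combined with the diagonal structure of $\nabla T^{-1}$ collapses the sum to a single term: $\partial_{y_j}[V(T^{-1}(y))] = (\partial_j V)(T^{-1}(y)) \cdot \Delta_j f(y_j)$. Summing these gives the stated expression. The whole argument is essentially bookkeeping; the only place where care is required is consistently using the Jacobian of $T^{-1}$ (not of $T$) in the change of variables, and tracking the sign on the log-determinant, so that the volume constant appears with the correct sign. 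No estimates or limiting arguments are needed—regularity of $F$ (equivalently, continuity of $\eta$) is what ensures $T^{-1}$ is differentiable with a nonvanishing Jacobian.
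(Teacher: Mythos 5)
Your proposal is correct and follows essentially the same route as the paper's proof: apply the change-of-variables formula to the density, exploit the coordinatewise structure so that $\nabla T^{-1}$ is diagonal with entries $\Delta_i f(y_i)$, and take $-\ln$ to split the log-determinant into $U(y)$ and the volume constant. You are slightly more explicit than the paper in spelling out the chain-rule step for the gradient, but the argument is the same.
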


\begin{proof}
    Recall that the pushforward density is given by
    \[
    p_\#(y) = p(T^{-1}(y)) \left| \det \nabla_y T^{-1}(y) \right|,
    \]
    and therefore the pushforward potential takes the form
    \[
    V_\#(y) = V(T^{-1}(y)) - \ln \left| \det \nabla_y T^{-1}(y) \right|.
    \]
    Simplifying the second ``correction" term, we have
    \begin{align*}
    \ln \left| \det \nabla_y T^{-1}(y) \right|
    &= \ln \left| \prod_{i=1}^d \partial_{y_i} \sigma_i(F(y_i)) \right| \\
    &= \ln \left[ \prod_{i=1}^d \sigma_i'(F(y_i)) \partial_{y_i} F(y_i) \right] \\
    &= \ln \left[ \prod_{i=1}^d \Delta_i f(y_i) \right] \\
    &= \ln \operatorname{vol}(\hyperc^d) + \sum_{i=1}^d \ln f(y_i).
    \end{align*}
    This proves the expression for the pushforward potential.
    The expression for the gradient of the pushforward potential follows directly.
\end{proof}

\cref{thm:reparam} has an appealing interpretation. 
Namely, the pushforward potential $V_{\#}$ inherits its value from $V$, but is corrected by a ``confining potential'' $U:\reals^d \to \reals$ whose particular form depends on the probability density function (PDF) of the selected random variable.\footnote{There is also a constant term that depends on the volume of the hypercube. However, we are not interested in this constant term as it has no effect on the Langevin dynamics.}
This confining potential ensures that particles are incentivized to remain away from the boundaries of the hypercube. 
On the other hand, the gradient $\nabla V_\#$ inherits its value from $\nabla V$, but is scaled by the PDF, and incorporates a ``restoring force'', ensuring particles are drawn back into bounds.
These considerations allow one to engineer reparameterizations with desirable boundary behavior.


In \cref{app:reparameterization} we provide a more detailed discussion of the implications of \cref{thm:reparam}, and concretely analyze transformations constructed using the Cauchy, Gaussian, and logistic random variables.
We begin with the case of the logistic random variable, which corresponds to the method currently utilized in libraries such as Stan \cite{carpenter2017stan}, and show that it is asymptotically equivalent to Lasso ($l_1$) regularization of the potential \cite{wang2013tikhonov}.
Since the logistic random variable is sub-Gaussian (tails decay faster than Gaussian tails), the next interesting case is to test reparameterizations constructed with the Gaussian.
In this case the particle pushforward and pullback maps are special functions. 
However, as we will see, the Gaussian is unique in that it provides a ``strong'' harmonic confining potential, consequently yielding desirable behavior of flows near the boundary.
Equivalently, a Gaussian can be thought of as Tikhonov ($l_2$) regularization of the potential \cite{wang2013tikhonov}.
Finally, we conclude with a random variable with heavier tails than a Gaussian, the Cauchy.
Results of these experiments will be presented in \cref{sec:experiments}.

\subsection{Birth Death Dynamics}
\label{sec:birth-death}

The Langevin birth-death dynamics describes the evolution of a probability density $\rho_t$ satisfying the following PDE \cite{lu2019accelerating}
\begin{equation}
\label{eq:bd-dynamics}
\partial_t \rho_t = \nabla \cdot (\rho_t \nabla V + \nabla \rho_t) - \alpha(x, \rho_t) \rho_t,
\end{equation}
where the birth-death rate $\alpha$ is defined as
\begin{equation}
\label{eq:bd-alpha}
\alpha(x; \rho, p) := \log \rho(x) - \log p(x) - \mathbb{E}_{x' \sim \rho} \left[ \ln \rho(x') - \ln p(x') \right],
\end{equation}
and $p = e^{-V}$ is the target posterior.
The first term on the right hand side of \cref{eq:bd-dynamics} is simply the evolution prescribed by \cref{eq:FPE}.
The novel ``birth-death'' aspect of the dynamics is given by the second term, which provides a rule for the decay or growth of the density $\rho_t$ at every point $x\in\reals^d$.
Namely, if $\alpha(x) > 0$ corresponding to an ``excess of density'' at $x$, then $\rho_t$ decays at that point.
Likewise, if $\alpha(x) < 0$ corresponding to a ``deficit of density'' at $x$, then $\rho_t$ grows at that point.
If $\alpha(x) = 0$, nothing happens.

\begin{remark}
\cref{eq:bd-dynamics} has several key properties.
First, the rate term $\alpha$ is identically zero when $\rho_t=p$, hence $p$ is a stationary point of the dynamics.
Second, the dynamics are mass-preserving, meaning that if $\rho_0$ is a probability measure, it remains a probability measure along the course of its evolution.
Third, the rate is independent of the normalizing constant of $p$.
Finally, the ``pure'' birth death dynamics are diffeomorphic invariant.
This implies that if $\rho_\#$ evolves according to a birth-death process, then so too does $\rho$, and their rates satisfy the relationship $\alpha(\cdot, \rho_\#, p_\#) = \alpha(T^{-1}(\cdot), \rho, p)$.
Notably, the smoothing procedure introduced below in \cref{eq:pampel-lambda} breaks diffeomorphic invariance.
\end{remark}

It was shown in \cite{lu2019accelerating} that these dynamics exhibit a gradient flow structure, and are therefore amenable to associated PDE analysis techniques.
One consequence of the analysis is that the rate of convergence to the unique stationary point $e^{-V}$ (under general conditions) is independent of the particular details of the potential.
This is remarkable, as it suggests that convergence of the dynamics is robust to issues related to ill-conditioning and multimodality; this is in stark contrast to pure Langevin dynamics, where convergence rates are known to strongly depend on such factors.
Thus Langevin birth-death is an appealing method to investigate, since most inference tasks in engineering and physics suffer from ill-conditioning, multimodality, or both \cite{Feroz_2008}.

There are several hurdles that must be overcome in order to simulate \cref{eq:bd-dynamics} with particles.  
First, the rate $\alpha$ must be replaced with an alternative that is well defined for discrete measures.
We select 
\begin{equation}
\label{eq:pampel-lambda}
\Lambda(x; \rho) := \ln k * \frac{\rho}{p}(x) - \mathbb{E}_{x' \sim \rho} \ln k * \frac{\rho}{p}(x'),
\end{equation}
where $*$ denotes the convolution operation and $k$ is a kernel.
This is convenient as it is numerically tractable and identically zero when $\rho = p$.
Second, the jump process proposed to simulate the birth-death dynamics couples the rate calculation with the diffusion \cite{lu2019accelerating}, meaning that each particle in the ensemble must be updated in a serial manner.  
As long as the number of particle jumps per iteration remain low however, the diffusion and jump processes may be decoupled \cite{PhysRevE.107.024141}, leading to an embarassingly parallel method. 
In practice, we ensure that jumps remain rare by adaptively scaling the rate $\Lambda$ by a constant $c > 0$ such that the number of particles accepted to jump is less than some prescribed fraction of the ensemble size (see \cref{app:bounding-jumps}).  
Finally, the performance of the simulation critically depends on selecting an appropriate smoothing kernel $k$.  
We select a Gaussian kernel with a covariance matrix given by the inverse Fisher (\cref{eq:fisher-approx}), and select the bandwidth using the median heuristic \cite{garreau2017large}
\begin{equation}
  \label{eq:rbf-kernel}
k(x, y) = \exp \left( -\frac{1}{2h} \norm{x - y}_{\mathcal{I}}^2 \right),
\end{equation}
where $h=\operatorname{med}_{i \neq j} \norm{x_i - x_j}_{\mathcal{I}}^2$.
The algorithm is summarized in \cref{algo:pampel-bd}.
To implement the algorithm efficiently, we utilize a data structure called \texttt{ParticleTracker} which allows for $\mathcal{O}(1)$ selection, removal, and choice.
Further discussion of the \texttt{ParticleTracker} data structure may be found in \cref{app:pt}.

\begin{algorithm}[t]
    \caption{Pampel Birth-Death} 
    \label{algo:pampel-bd}
    \begin{algorithmic}[1] 
        
        \Require Particle set $X$, constant $\gamma>0$, and rates $\Lambda_i$ for every particle $i$
        
        \State Sample $r_i \sim \mathcal{U}(0,1)$ for all $i$
        \State $\xi \gets$ Random permutation of $\{i : r_i < 1 - \exp(-|\Lambda_i| \cdot \gamma) \}$
        \State $\text{alive} \gets \{1, 2, \dots, N\}$
        \State $\text{jumps} \gets [1, 2, \dots, N]$
        
        \For{$i \in \xi$}
            \If{$i \in \text{alive}$}
                \State $j \gets$ alive.choose\_random\_item()
                \If{$\Lambda_i > 0$}
                    \State $\text{jumps}[i] \gets j$
                    \State alive.remove\_item($i$)
                \Else
                    \State $\text{jumps}[j] \gets i$
                    \State alive.remove\_item($j$)
                \EndIf
            \EndIf
        \EndFor

        \State \Return $X[\text{jumps}]$
        
    \end{algorithmic}
\end{algorithm}

\subsection{Annealing}

The Langevin dynamics are known to exhibit \textit{metastability} \cite{weinan2021applied}.
This refers to the fact that the expected time it takes for a particle to hop over an energy barrier is exponential in the height of the barrier.  
In practical terms this means that once a particle finds a deep mode, it is unlikely to ever leave.
Having many particles explore the energy landscape therefore improves the chances of discovering the primary modes of a posterior.  
However, the landscapes we encounter in practice are formidable, and it becomes prohibitively expensive to simulate the number of particles necessary to get adequate coverage.  
This motivates simulated annealing \cite{neal1996sampling}, a tool used to facilitate a particles exploration of the energy landscape.  

Let $V: \reals^d \to \reals$ be a potential.
We begin by introducing an (inverse) temperature schedule function $\beta: [0,1] \rightarrow [\beta_{\min}, 1]$, where $0 < \beta_{\min} < 1$, and $\lim_{t \to 1} \beta(t) = 1$.
Here, $\beta_{\min}$ denotes the initial temperature the posterior will be heated to, and the schedule returns to ``room temperature'' ($\beta=1$) towards the end of the simulation.  
When $\beta$ is smooth, it is called an \textit{annealing schedule}.  
Then the heated posterior and corresponding potential then take the form  
\begin{align*}
p(x;\beta) &= e^{-\beta V(x)}, & V(x; \beta) &:= -\ln p(x;\beta) = \beta V(x).
\end{align*}

The idea is to perform the dynamics over a heated posterior as it cools down, hence facilitating exploration at the beginning of the simulation.  
The dynamics which implement this idea are given as  
\begin{align*}
    x_{\ell + 1} &= x_{\ell} - \nabla V(x_{\ell};\beta) \, \epsilon + \sqrt{2 \epsilon} \, \xi, \quad \xi \sim \mathcal{N}(0, \Iddd) \\
    &= x_{\ell} - \nabla V(x_{\ell}) \, \beta \, \epsilon + \sqrt{2 \epsilon} \, \xi \\
    &= x_{\ell} - \nabla V(x_{\ell}) \, \tau + \sqrt{\frac{2 \tau}{\beta}} \, \xi,
\end{align*}
where we have introduced a time reparameterization $\tau = \beta \epsilon$.  
Hence, Langevin dynamics on an annealed potential is equivalent to room temperature dynamics with a temperature scaled noise term.  
We seek an analogous result for preconditioned Langevin over a hypercube.  

Let $V : \mathbb{H}^d \to \mathbb{R}$ be a potential with hypercube support, and $T : \mathbb{H}^d \to \mathbb{R}^d$ be a coordinate transformation constructed by the procedure outlined in \cref{sec:reparam}.  
Then by pushing forward the heated potential $V(\cdot \, \,; \beta)$, we get  
\begin{align*}
    T_\#(V(\cdot \, \,; \beta))(y) &= V(T^{-1}(y)) \, \beta + U(y) \\
    &= \beta \left[ V(T^{-1}(y)) + \frac{1}{\beta} U(y) \right] \\
    &=: \beta \, V_\#(y;\beta),
\end{align*}
where we have introduced the temperature dependent potential $V_\#(y; \beta)$.
Notably, this potential has a ``cooled'' confinement $U$, and simplifies to $V_\#(y)$ when $\beta = 1$.  
Then  
\begin{align*}
    y_{\ell + 1} &= y_\ell - \nabla T_\#(V(\cdot \, \,; \beta))(y_\ell) \, \epsilon + \sqrt{2 \epsilon} \, \xi, \quad \xi \sim \mathcal{N}(0, \Iddd) \\
    &= y_\ell - \beta \nabla V_\#(y_\ell; \beta) \, \epsilon + \sqrt{2 \epsilon} \, \xi \\
    &= y_\ell - \nabla V_\#(y_\ell; \beta) \, \tau + \sqrt{\frac{2 \tau}{\beta}} \, \xi,
\end{align*}
where $\tau = \beta \epsilon$ was used.  
We see that, as before, the noise is scaled by the temperature, but now the confining potential is correspondingly ``strengthened''.  

\begin{remark}
Let us contrast this to a naive annealing approach, which would scale the noise term in the dual space.  
Then we have  
\begin{align*}
    y_{\ell + 1} &= y_{\ell} - \nabla V_\#(y_{\ell}) \, \tau + \sqrt{\frac{2 \tau}{\beta}} \, \xi \\
    &= y_\ell - \left[ \beta \nabla V_\#(y_\ell) \right] \epsilon + \sqrt{2 \epsilon} \, \xi,
\end{align*}
from which we can conclude that the naive approach corresponds to Langevin dynamics with potential  
\begin{align*}
    \beta V_\#(y) = \beta V(T^{-1}(y)) + \beta U(y).
\end{align*}

This has the unfortunate property of heating up the confining potential, and making it more likely for particles to visit the extremal regions near the boundary of the hypercube.  
Our intuition tells us that this will likely lead to numerical issues, as particles will be incentivized to visit regions near the boundary of the hypercube.

\end{remark}

\subsection{Toroidal Topology}
\label{sec:toroid}

Our strategy is to treat the support of a GW model as $\chi = \mathbb{H}^8 \times \mathbb{T}^3$.  
We have already discussed how to handle the hypercube.  
In this section we review how to adapt the Langevin diffusion and birth-death to accommodate the hypertorus.  
Let $x \sim y$ be an equivalence relation on $\reals$ which holds true if there exists a $k \in \mathbb{Z}$ such that $x + 2\pi k = y$. 
A classic result from topology states that $\mathbb{R}/\mathord{\sim}$ equipped with the quotient topology is homeomorphic to $\mathbb{S}$.  
This provides a convenient strategy to pushforward objects defined on $\mathbb{R}$ to the circle $\mathbb{S}$ via the following diagram
\begin{center}
\begin{tikzcd}
\mathbb{R} \arrow[d, "\text{mod}"'] \arrow[rd, "\phi \, \circ \, \text{mod} "] & \\
\mathbb{R}/\mathord{\sim} \arrow[r, "\phi"'] & \mathbb{S}
\end{tikzcd}
\end{center}
where $\phi: \theta \mapsto (\cos \theta, \sin \theta)$.  
In practice we stop short of applying $\phi$, and mainly work on $\mathbb{R}/\mathord{\sim}$.
The generalization to the hypertorus follows much the same logic.  

Pushing forward the Langevin dynamics via the diagram yields the \emph{circular Langevin diffusion} \cite{GarcaPortugus2017}, while annealing and preconditioning port over trivially.  
However, the birth-death process depends on the support space through the kernel, and therefore requires modification.  
Our primary concern is to incorporate anisotropies in the kernel between bounded and periodic coordinates.  
We accomplish this in an identical manner, i.e, by pushing forward a Gaussian random variable via the diagram.  
This yields the \emph{wrapped Gaussian} \cite{mardia2009directional} whose density is given by
\begin{equation}
g(x ; \mu, \sigma^2) = \sum_{k \in \mathbb{Z}} \mathcal{N}(x + 2\pi k ; \mu, \sigma^2).
\end{equation}
To obtain a useful kernel, we begin by constructing a kernel over $\mathbb{R}^{11}$ using \cref{eq:rbf-kernel}, and then add a (truncated) sum of shifts in the appropriate coordinates to obtain a kernel over $\mathbb{R}^8 \times \mathbb{R}^3/\mathord{\sim}$.  
The resulting kernel incorporates anisotropy between all coordinates and satisfies periodic boundary conditions.

\subsection{Spherical topology}
\label{sec:spherical-topology}

Spherical topology enters the GW model through the various orientation parameters required to specify the state of the coalescing binary.  
The simplest way to address this scenario, and the one adopted in this work, is to employ a “cylindrical approximation” to the sphere.  
That is, we use a single chart and incorporate periodic boundary conditions on the azimuthal coordinate, $(\theta, \phi) \in [0, \pi) \times \mathbb{S}$.  
This simplification allows us to handle $\mathbb{S}^2$ with the tools developed in \cref{sec:reparam} and \cref{sec:toroid}.  
Note that this simplicity comes at the cost of significant distortions near the poles.  
We leave other approaches to handle spherical topologies to future work.

\section{Numerical experiments}
\label{sec:experiments}

\subsection{Hybrid Rosenbrock}
In this section we introduce the hybrid Rosenbrock density (HRD) \cite{pagani2022n} which we will use to benchmark our methods.
The HRD is a useful model to test sampling schemes for two primary reasons.
First, it is a simple non-convex model that exhibits ``banana'' like features in each marginal. 
This makes it ideal for testing a sampler's robustness with respect to ill-conditioned energy landscapes. 
Second, the hybrid Rosenbrock can be sampled directly.
This yields a convenient way to compare the output of the sampler under question with ``ground truth''.

The \textit{hybrid Rosenbrock} target is given by:

\begin{equation}
  \pi(\mathbf{x}) \propto \exp \left\{ 
      -a (x_1 - \mu)^2 
      - \sum_{j=1}^{n_2} \sum_{i=2}^{n_1} b_{ji} \left( x_{ji} - x_{j(i-1)}^2 \right)^2 
  \right\},
\end{equation}

where $\mu, x_{ji} \in \mathbb{R}$, $a, b_{ji} \in \mathbb{R}^+$, and where the final dimension of the distribution is given by the formula $n = (n_1 - 1)n_2 + 1$.
We choose $a = 30, b_{j,i} = 20, \mu=1$, and $n_2 = 3, n_1 = 4$, which yields a dimension of $n=10$. 
We use the \texttt{hybrid\_rosenbrock} library, which implements the HRD in \texttt{Python} \cite{leviyev2025hybrid}.

\subsection{Fisher preconditioning}
\label{subsec:fisher-precond-exp}
Our first experiment tracks the energy statistic \cite{panda2020hyppo,szekely2013energy} for the standard Langevin flow and the Fisher preconditioned Langevin flow.
We draw $N=200$ samples from a uniform distribution $\mathrm{Unif}[-5,5]^{\times 10}$, disable birth-death, and set the Fisher damping $\lambda=0.001$.
Note that the domain in this experiment is unconstrained.
For the preconditioned run, we set the timestep to $\tau=2$, and for the standard run we set the timestep to $\tau=0.001$.
Note that the timestep for the standard run was chosen such that the flow was stable.
On the other hand, the preconditioned flow is significantly more stable, and hence can be simulated with a larger timestep.
The results of the experiment are illustrated in the left panel of \cref{fig:precond}.
We see that the preconditioned flow outperforms the standard flow, converging around five times faster than its counterpart in $\epsilon$-statistic \cite{panda2020hyppo}, which is a two sample similarity test a set of i.i.d samples from from the target and the ensemble.

\subsection{Reparameterization Experiments}
\label{sec:experiments-reparam}
The aim of this section is to consider the effect of different reparameterizations. 
We keep all settings identical to \cref{subsec:fisher-precond-exp}, and utilize Fisher preconditioning.
However, we now constrain to the domain $[-5,5]^{\times 10}$.
Every setting is kept fixed, except for the reparameterization.
In the right panel of \cref{fig:precond} we illustrate the convergence utilizing the Gaussian, Logistic, and Cauchy reparameterizations.
Our results indicate that the Gaussian reparameterization converges about $2000$ iterations faster than the logistic, whilst the Cauchy lags behind.
We believe this speedup is attained since the gradient is mildly scaled down compared to the logistic case, whilst the restoring force in the Gaussian case better ensures particles do not get stuck in extremal regions close to the boundary.
These details are dicussed more in \cref{app:reparameterization}.
This provides a simple way of attaining faster converging dynamics, and may prove useful in situations where the forward model is expensive.
Finally, we note that the fact that the Gaussian reparameterization uses special functions to evaluate the pushforward and pullback maps is not a significant impediment to performance or implementation.

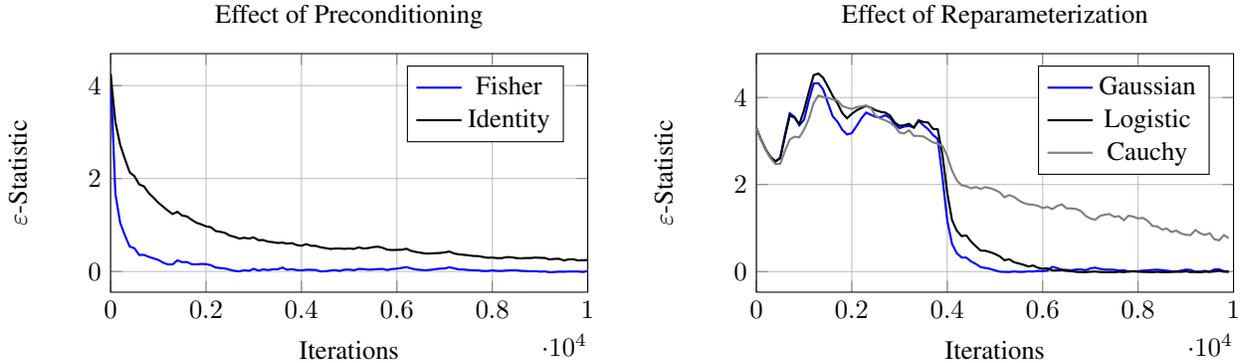
\begin{figure}[t]
  \centering

  \begin{minipage}[t]{0.48\textwidth}
      \centering
      \begin{tikzpicture}
      \begin{axis}[
          xlabel={Iterations},
          ylabel={$\varepsilon$-Statistic},
          title={Effect of Preconditioning},
          xmin=0, xmax=10000,
          legend style={
              at={(0.95,0.95)},
              anchor=north east,
              draw=black,
              fill=white,
          },
          grid=major,
          width=\textwidth,
          height=0.6\textwidth,
      ]
      \addplot[blue, thick] 
          table[x=x, y=fisher, col sep=space, header=true] {figures/ks_plot_small.dat};
      \addlegendentry{Fisher}
      
      \addplot[black, thick] 
          table[x=x, y=identity, col sep=space, header=true] {figures/ks_plot_small.dat};
      \addlegendentry{Identity}
      \end{axis}
      \end{tikzpicture}
  \end{minipage}%
  \hfill
  \begin{minipage}[t]{0.48\textwidth}
      \centering
      \begin{tikzpicture}
      \begin{axis}[
          xlabel={Iterations},
          ylabel={$\varepsilon$-Statistic},
          title={Effect of Reparameterization},
          xmin=0, xmax=10000,
          legend style={
              at={(0.95,0.95)},
              anchor=north east,
              draw=black,
              fill=white,
          },
          grid=major,
          width=\textwidth,
          height=0.6\textwidth,
      ]
      \addplot[blue, thick]
          table[x=x, y=Gauss, col sep=space, header=true] {figures/energy_stats.dat};
      \addlegendentry{Gaussian}

      \addplot[black, thick]
          table[x=x, y=Logistic, col sep=space, header=true] {figures/energy_stats.dat};
      \addlegendentry{Logistic}

      \addplot[gray, thick]
          table[x=x, y=Cauchy, col sep=space, header=true] {figures/energy_stats.dat};
      \addlegendentry{Cauchy}
      \end{axis}
      \end{tikzpicture}
  \end{minipage}

  \caption{Convergence of Langevin dynamics with \emph{(left panel)} optimal Fisher preconditioning and \emph{(right panel)} different random variables for reparameterization. We plot the evolution of the energy 2-sample statistic \cite{panda2020hyppo,szekely2013energy} between the evolving ensemble and i.i.d samples from the target. (left panel) We see that the flow with Fisher preconditioning converges quickly, while the flow without preconditioning requires many more iterations to sufficiently explore the energy landscape. (right panel) Convergence of the Langevin flow with different random variables used for reparameterization. We see that the Guassian reparameterization converges about $2000$ iterations faster than the logistic reparameterization, while the Cauchy reparameterization trails behind.}
  \label{fig:precond}
\end{figure}

\subsection{Two-ring Gaussian mixture model}

\begin{figure}[ht]
  \centering

  \begin{minipage}[t]{0.32\textwidth}
    \centering
    \includegraphics[width=\linewidth]{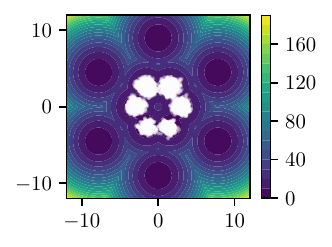}
    \caption*{(a) Standard}
  \end{minipage}
  \hfill
  \begin{minipage}[t]{0.32\textwidth}
    \centering
    \includegraphics[width=\linewidth]{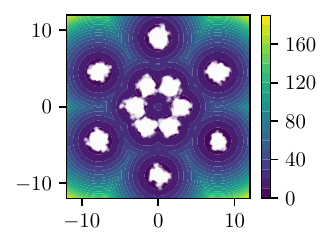}
    \caption*{(b) Annealed}
  \end{minipage}
  \hfill
  \begin{minipage}[t]{0.32\textwidth}
    \centering
    \includegraphics[width=\linewidth]{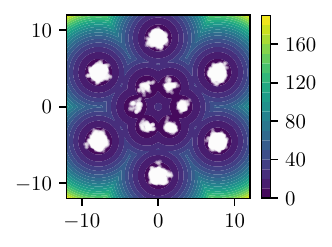}
    \caption*{(c) Annealed + BD}
  \end{minipage}

  \caption{Final samples produced by the Langevin dynamics in its standard implementation (\emph{left panel}), when augmented with annealing (\emph{central panel}), and augmented with both annealing and birth death (\emph{right panel}) on the two ring Gaussian mixture problem.}
  \label{fig:scatter_comparison}
\end{figure}

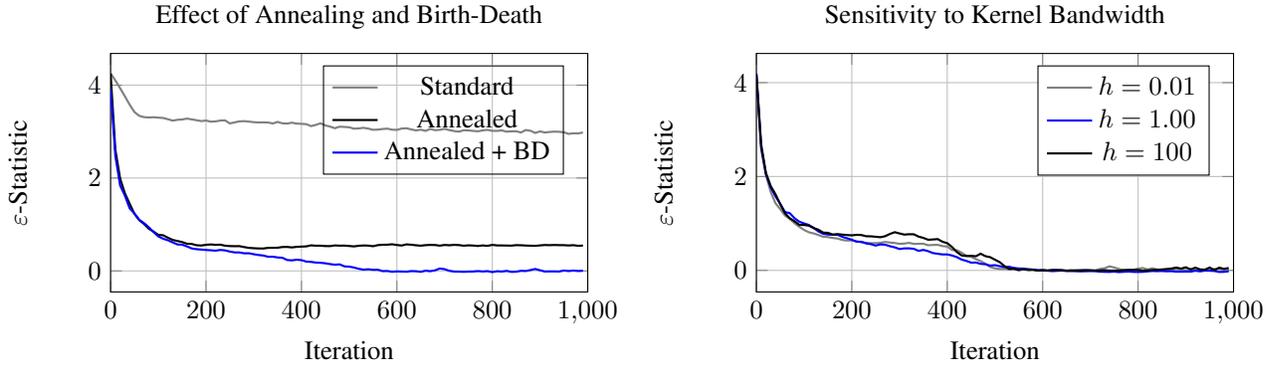
\begin{figure}[ht]
  \centering

  \begin{minipage}[t]{0.48\textwidth}
    \centering
    \begin{tikzpicture}
      \begin{axis}[
          width=\textwidth,
          height=0.6\textwidth,
          xlabel={Iteration},
          ylabel={$\varepsilon$-Statistic},
          grid=major,
          xmin=0, xmax=1000,
          legend style={at={(0.95,0.95)}, anchor=north east, draw=black, fill=none},
          title={Effect of Annealing and Birth-Death}
      ]
      \addplot[gray, thick] table [x=Iteration, y=Stat1, col sep=space] {figures/ring_convergence.dat};
      \addlegendentry{Standard}

      \addplot[black, thick] table [x=Iteration, y=Stat2, col sep=space] {figures/ring_convergence.dat};
      \addlegendentry{Annealed}

      \addplot[blue, thick] table [x=Iteration, y=Stat3, col sep=space] {figures/ring_convergence.dat};
      \addlegendentry{Annealed + BD}
      \end{axis}
    \end{tikzpicture}
  \end{minipage}
  \hfill
  \begin{minipage}[t]{0.48\textwidth}
    \centering
    \begin{tikzpicture}
      \begin{axis}[
          width=\textwidth,
          height=0.6\textwidth,
          xlabel={Iteration},
          ylabel={$\varepsilon$-Statistic},
          grid=major,
          xmin=0, xmax=1000,
          legend style={at={(0.95,0.95)}, anchor=north east, draw=black, fill=white},
          title={Sensitivity to Kernel Bandwidth}
      ]
      \addplot[gray, thick] table [x=iteration, y=hhh, col sep=space] {figures/ring_bandwidth_sensitivity.dat};
      \addlegendentry{$h=0.01$}

      \addplot[blue, thick] table [x=iteration, y=h, col sep=space] {figures/ring_bandwidth_sensitivity.dat};
      \addlegendentry{$h=1.00$}                                                  

      \addplot[black, thick] table [x=iteration, y=hh, col sep=space] {figures/ring_bandwidth_sensitivity.dat};
      \addlegendentry{$h=100$}

      \end{axis}
    \end{tikzpicture}
  \end{minipage}

  \caption{Convergence with respect to the $\epsilon$-statistic \cite{panda2020hyppo,szekely2013energy}. (\emph{left panel}) shows a comparison between the sampling methods. (\emph{right panel}) shows the sensitivity to bandwidth $h$.}
  \label{fig:ring_comparisons}
\end{figure}

We define a two-ring Gaussian mixture model in $\mathbb{R}^2$ composed of 12 Gaussian components: 6 on an inner ring and 6 on an outer ring.  
The inner ring has radius $r=3$ and total mixture weight of $0.1$, while the outer ring has radius $r=6$ and weight $0.9$.  
The geometry of the problem is illustrated in \cref{fig:scatter_comparison}.
We initialize $N=200$ samples from a standard Gaussian, and use a linear annealing schedule from $\beta_{\min} = 10^{-5}$ to $\beta_{\max} = 1$.
This test case is meant to simulate a phenomena encountered in GW problems.
Namely, we want to confirm that the particles can pass a shallow basin to reach the deeper basin.
In addition, we want to confirm that the mode weights can be recovered accurately.
Our results are illustrated in \cref{fig:ring_comparisons}.
We see that standard Langevin dynamics gets stuck at the local basin.
With annealing, the particles find all the modes, but are unable to reconstruct the weights of the modes.
Finally, with annealing and birth-death, all modes are found and balanced appropriately.
We also note that the results appear insensitive to the chosen bandwidth in \cref{eq:rbf-kernel}.
This is likely due to an imbalance between the energy and distance terms in the birth-death rate. 
A further investigation into optimal kernel choices for the birth-death process remains an interesting and open problem.

\section{Parameter estimation of GW150914}
\label{sec:gw150914}
Finally, we introduce and provide results for our primary application: the parameter estimation of the first detected binary black hole (BBH) coalescence, GW150914~\cite{Abbott2016}.
In the context of GW parameter estimation, our goal is to infer the source parameters $\theta \in \chi$ of a GW signal $s(t)$ given a signal embedded in noisy detector strain data $d(t) = h(t;\theta) + n(t)$.
Here, $n(t)$ denotes the noise process of the detector, and $h(t;\theta)$ is a signal model.
The \textit{Whittle likelihood} for a single detector is given by\footnote{See Ref.~\cite{Maggiore:2007ulw} for a pedagogical introduction.} 
\begin{equation}\label{eq:gw-potential}
    V(\theta) := -\log \mathcal{L}(\theta) = \frac{1}{2} \re \brk[a]1{d(\cdot) - h(\cdot \, ; \theta) | d(\cdot) - h(\cdot \, ; \theta)}_{\text{GW}},
\end{equation}
where the inner product $\brk[a]{\cdot|\cdot}_{\text{GW}}: \com^n \times \com^n \to \com$ is defined as
\begin{equation}\label{eq:gw-inner}
    \brk[a]{a,b}_{\text{GW}} \defn 4 \sum_{i=1}^n \frac{\tilde{a}(f_i)^* \tilde{b}(f_i)}{S_n(f_i)} \Delta f,
\end{equation}
where the tilde denotes a complex Fourier transform\footnote{In practice computed with a fast Fourier transform (FFT).}, $\Delta f = 1/T$ with $T$ the duration of the signal, and $S_n(f)$ is the one-sided noise power spectral density (PSD) of the detector which satisfies
\begin{equation}\label{eq:noisePSD}
    \brk[a]1{\tilde{n}(f) \tilde{n}(f')} = \frac{1}{2} S_n(f) \delta(f - f').
\end{equation}
Given a network of detectors, and under the assumption that the noise in each detector is uncorrelated, the network likelihood is simply given by the product of the individual detector likelihoods:
\begin{equation}\label{}
    \like_{\rm net}(\theta) = \prod_{i\in {\rm detectors}} \like_i(\theta).
\end{equation}

In general, the gravitational signal emitted by a (quasicircular) BBH system can be described in terms of a set of 15 parameters, $\theta = \{{\cal M}_c,\, q,\, \bm{\chi_1},\, \bm{\chi_2}, d_L,\, \theta,\, \phi,\, \iota,\, \psi,\, t_c,\, \Phi_c\}$ where:
${\cal M}_c = (m_1 m_2)^{3/5} / (m_1 + m_2)^{1/5}$ is the \emph{chirp mass} of the binary, $q = m_2/m_1$ is the \emph{mass ratio}, $m_1, m_2$ denote the component masses, $\bm{\chi_1}$ and $\bm{\chi_2}$ are the adimensional spins of the two black holes, $d_L$ is the luminosity distance to the source, $\theta$ and $\phi$ are the sky position parameters, $\iota$ is the inclination angle of the binary's orbital angular momentum with respect to the line of sight, $\psi$ is the polarization angle, $t_c$ is the time of coalescence, and $\Phi_c$ is the phase at coalescence.
In this paper, we employ an aligned-spin waveform model \textsc{IMRPhenomD} \cite{PhysRevD.93.044006,PhysRevD.93.044007} that outputs directly into frequency space and incorporates the inspiral, merger, and ringdown phases of the coalescence.
Note that this model only depends on the adimensional spin parameters aligned with the binary's orbital angular momentum, denoted by $\chi_{1z}$ and $\chi_{2z}$.
A summary of the parameters, their respective astrophysical priors\footnote{See \cite{callister2021thesauruscommonpriorsgravitationalwave} for a comprehensive review of GW priors.} and topological character is given in \cref{tab:param-topologies}.

\begin{table}[t]
  \centering
  \begin{tabular}{|c|c|c|c|c|}
  \hline
  \textbf{Parameter} & \textbf{Description} & \textbf{Support} & \textbf{Bounded/Periodic} & \textbf{Prior}\\
  \hline
  $\mathcal{M}_c$ & Chirp mass & $\mathbb{R}^+$ & B & Uniform in $(m_1, m_2)$ \\
  $q$ & Mass ratio & $(0, 1]$ & B & Uniform in $(m_1, m_2)$ \\
  $d_L$ & Luminosity distance & $\mathbb{R}^+$ & B & Power law\\
  $\theta$ & $\frac{\pi}{2}$ - Declination & $[0, \pi]$ & B & $\sin$\\
  $\phi$ & Right Ascension & $[0, 2\pi)$ & P & Uniform \\
  $\iota$ & Inclination angle & $[0, \pi]$ & B & $\sin$ \\
  $\psi$ & Polarization angle & $[0, \pi]$ & P & Uniform \\
  $t_c$ & Time of coalescence & $\mathbb{R}$ & B & Uniform \\
  $\phi_c$ & Phase of coalescence & $[0, 2\pi]$ & P & Uniform\\
  $\chi_1$ & Aligned spin 1 & $(-1, 1)$ & B & Isotropic \cite[Eq 5]{callister2021thesauruscommonpriorsgravitationalwave}\\
  $\chi_2$ & Aligned spin 2 & $(-1, 1)$ & B & Isotropic \\
  \hline
  \end{tabular}
  \caption{Parameters employed in the analysis of GW150914 with Descriptions, Support, and Type (Bounded/Periodic).}
  \label{tab:param-topologies}
\end{table}

We use 4~s of publicly available data \cite{abbott2021open} through the \texttt{gwpy} package~\cite{gwpy} around the GPS time 1126259462.4 for the two LIGO detectors, apply a Tukey window with roll-off parameter of 0.2~s as a preprocessing step, and use the \texttt{gwpy} implementation of the Welch method to estimate the PSD of the data in an interval of 32~s prior to the event (in this case with a roll-off of 0.4~s for the window).\footnote{These are the same settings employed in \url{https://git.ligo.org/lscsoft/bilby/blob/master/examples/gw_examples/data_examples/GW150914.py}.}
The resulting frequency grid is of size $F=1968$ over a range $[20, 512]$~Hz.
We use \texttt{gwfast}'s \cite{Iacovelli-2022} differentiable implementation of \textsc{IMRPhenomD} and the projection onto the detectors to calculate the network potential needed in the birth-death process, and use algorithmic differentiation to calculate the gradients for the Langevin flow.
The mirroring, preconditioning, birth-death, and topological adaptations are all incorporated as this task necessitates every one of these features.
For our algorithm, we use timestep $\tau=0.5$, stride $M=100$, iterations $L=20000$, $\gamma=0.01\eps$, $\sigma=0.01$. 
\cref{fig:cornerplot-gw} illustrates the results of this experiment.
In blue we have samples drawn from the \texttt{parallel Bilby} software \cite{Ashton-2019,Smith-2020} using the nested sampling algorithm \texttt{Dynesty} \cite{Speagle-2020}, and in red we have samples drawn from our method.
The \texttt{parallel Bilby} run is performed with the same data and waveform mode, employing 2048 live points and using the acceptance-walk sampling method for \texttt{Dynesty}.
The only difference between the two methods is represented by the use of a likelihood marginalized over $d_L, t_c, \Phi_c$ in the \texttt{parallel Bilby} run, employed to reduce the dimensionality of the problem and speed up the calculation. 
Considering this simplification, the \texttt{parallel Bilby} run took approximatley 1.5 hours to complete on a CPU cluster, parallelizing over 50 AMD EPYC 7742 CPUs.
After $2,775,1224$ likelihood evaluations, the effective number of samples was 9763. 

For Langevin birth-death, we set the timestep $\tau=0.5$, use the median bandwidth selection technique, and set the maximum fraction of teleporting particles to $f=0.05$. 
The number of particles is set to $N=500$, and the dynamics was run for $L=20000$ iterations with Fisher preconditioning, and a linear annealing schedule with $\beta_{\min} = 10^{-5}$.
We see in \cref{fig:cornerplot-gw} that the dynamics recovers the modes of the posterior well, but systematically overconstrains the parameters.
Note that in our method we do not incorporate Metropolis-Hastings corrections in the diffusion, and this naturally makes the diffusion biased \cite{vempala2019rapid}.
Another potential source of error is the kernel in the birth-death process, as kernel methods are highly sensitive to tuning parameters.
Our method took approximately 15 minutes to complete, where all model and gradient calculations were performed on an NVIDIA RTX-A6000 GPU.
Although the total number of model evaluations is approximately 10x larger than for parallel Bilby, we note that our evaluations are performed in parallel, and therefore the total time is still significantly reduced.
We believe with further refinement, representative samples may be routinely and robustly recovered with $N=200$ and $L=5000$ iterations.
Further investigation into improving the quality and robustness of recovered samples is left to future work.

\begin{figure}
    \centering
        \includegraphics[width=0.95\linewidth]{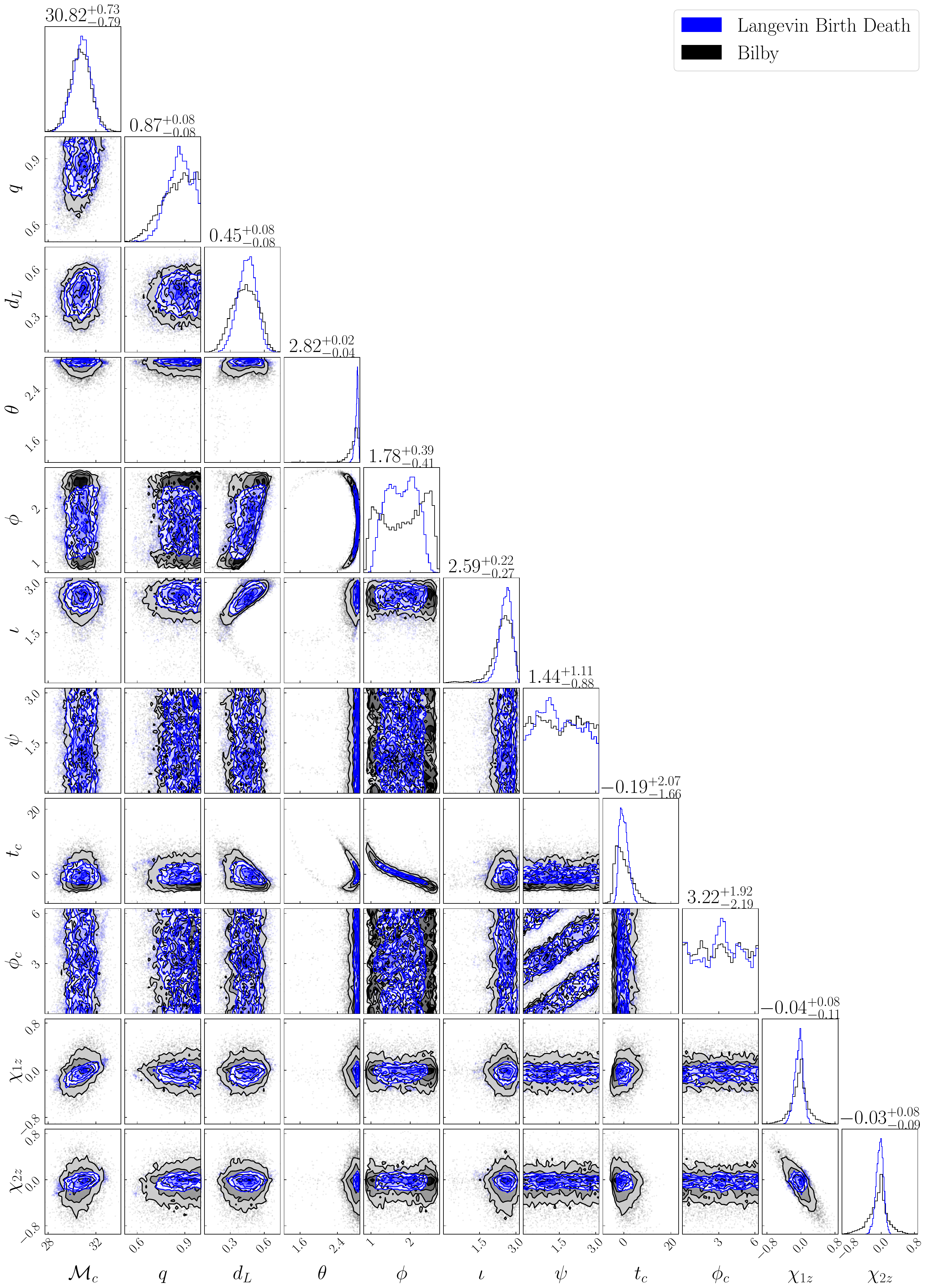}
    \caption{A corner-plot comparing samples obtained from the birth-death process (blue) and samples obtained from \texttt{Bilby} (black) for the GW event GW150914. We see that both results are in reasonable agreement.}
    \label{fig:cornerplot-gw}
\end{figure}

\section{Related Works and Conclusion}
The aim of our work has been to address issues related to adapting the Langevin dynamics to the particular needs of GW astronomy.
To this end, we proposed a multi-particle, pre-conditioned, and annealed Langevin diffusion augmented with a birth-death process that respects the topology of the hypercube and hypertorus.

In this paper, we simulate an ensemble of particles undergoing Langevin dynamics.
Notably, at any instance in time, all particles explore the posterior at temperature $1/\beta$.
Particles then teleport depending on rates calculated from \cref{eq:pampel-lambda}.
This is in contrast to replica exchange \cite{Syed2021}, where each particle explores the posterior at different temperatures, and swap probabilities are proportional to the energy and temperature discrepancies.
The advantage of replica exchange is that it utilizes a Metropolis filter, and is therefore asymptotically exact.
Furthermore, since there are particles that explore high temperature versions of the posterior throughout the simulation, this method is robust to mode discovery and retrainment.
However, since only one particle explores the posterior at room temperature, the dynamics must necessarily be simulated for longer in order to collect the desired quantity of samples.
The main tradeoff between birth/death and replica exchange is therefore speed vs robustness.
Both methods are embarrassingly parallel with respect to potential queries.
An asymptotically exact birth-death jump scheme has also been proposed recently \cite{lindsey2022ensemble}, but notably is a serial method.

In \cref{sec:reparam}, we adapted the Langevin dynamics to hypercube support spaces.
More generally, the Langevin dynamics may be applied to constrained spaces via mirrored Langevin dynamics (see \cref{app:mirror}), for which the hypercube is a special case.
Our aim, however, focused on analyzing the numerical behavior and properties of the chosen reparamaterization in this special case.
Consequently, we are able to provide guidelines on how to select the reparametrization, and suggest how to extend techniques like annealing to the hypercube.

In \cref{sec:ula}, we discussed the importance of preconditioning the Langevin dynamics, and in \cref{sec:fisher} provide a simple way to do so in the ensemble case.
Directions for further research include quasi-Newton methods, which are a generalization that allow for pointwise variation of the preconditioners.
These are expected to explore the parameter space more effectively, but come with their own set of challenges.
For example, the dynamics require calculating a costly correction term \cite{ma2015completerecipestochasticgradient}.
Alternatively, a Metropolis-Hastings filter may be incorporated.
See \cref{app:hessian-reparam} for a discussion on adapting quasi-Newton methods to the hypercube, and \cref{app:quasi} discussing quasi-Newton methods for GW parameter estimation.

\paragraph{Acknowledgements}
We are grateful to Michele Mancarella for his valuable insights, and to Konstantin Sarichev for generously providing GPU resources during the early stages of this project, and Jacob Lange for reviewing a draft of this work.
AZ was supported by NSF Grants PHY-2207594 and PHY-2308833, and by a College of Natural Sciences Catalyst Grant at UT Austin while carrying out this work.
F.I. is supported by a Miller Postdoctoral Fellowship and by NSF Grants No. AST-2307146, PHY-2513337, PHY-090003 and PHY-20043, by NASA Grant No. 21-ATP21-0010, by the John Templeton Foundation Grant 62840, by the Simons Foundation, and by the Italian Ministry of Foreign Affairs and International Cooperation grant No.~PGR01167.
This material is based upon work supported by NSF's LIGO Laboratory which is a major facility fully funded by the National Science Foundation.
This work has preprint numbers UT-WI-20-2025 and LIGO-P2500456. 

\printbibliography

\appendix

\section{Efficient Birth-Death Implementation}
\label{app:pt}
The birth-death process described in \cref{algo:pampel-bd} utilizes an abstract data structure (DS) with three methods: lookup, choice, and removal.
In this section we describe a specific \texttt{JIT} compilable implementation of this DS called \texttt{ParticleTracker} that allows us to implement the birth-death process with $\order(N)$ logical operations.
Given an ensemble $\{x_n\}_{n=1}^N$, we initialize two integer arrays of size $N$.
The first array, \texttt{particles}, contains labels for each particle in the ensemble.
The second array, \texttt{indices}, tracks the index at which a particle label appears in \texttt{particles}.  
As particles are ``killed'', the labels in \texttt{particles} are swapped to the right end of the array.
Similarly, the indices in \texttt{indices} are swapped to properly keep track of label locations in \texttt{particles}.
The trick is to group particles that are ``alive'' on the left side of the array: this makes lookup and choice trivial.
For lookup, if the index of a particle is less than or equal to $n_{\text{alive}} - 1$ inclusive, then it is alive, otherwise it is dead.
To randomly select a particle that is ``alive'', we select a random integer between $0$ and $n_{\text{alive}} - 1$ and choose the label correponding to that index in \texttt{particles}.
The \texttt{kill} method is illustrated in \cref{fig:bd-kill}, and a Python implementation for the \texttt{ParticleTracker} class can be found in \cite{repo}.

\begin{figure}[t]
  \centering
  \begin{tikzpicture}
      \node at (7, 4.2) {Particles};
      \node at (13, 4.2) {Indices};
      
      \newcommand{\drawArray}[5]{ 
          \node at (#1-1.5, #2) {#4}; 
          \foreach \i/\val/\col in {#3} {
              \draw[fill=\col, opacity=0.3] (#1+\i, #2-0.3) rectangle (#1+\i+1, #2+0.3); 
              \node at (#1+\i+0.5, #2) {\val}; 
          }
      }

      \drawArray{4.5}{3.5}{0/0/white, 1/1/white, 2/2/white, 3/3/white, 4/4/white}{Init};
      \drawArray{10.5}{3.5}{0/0/white, 1/1/white, 2/2/white, 3/3/white, 4/4/white}{};

      \drawArray{4.5}{2.5}{0/0/white, 1/1/white, 2/4/white, 3/3/white, 4/2/gray}{$\rightarrow$ \texttt{kill(2)}};
      \drawArray{10.5}{2.5}{0/0/white, 1/1/white, 2/4/white, 3/3/white, 4/2/white}{};

      \drawArray{4.5}{1.5}{0/0/white, 1/1/white, 2/3/white, 3/4/gray, 4/2/gray}{$\rightarrow$ \texttt{kill(4)}};
      \drawArray{10.5}{1.5}{0/0/white, 1/1/white, 2/4/white, 3/2/white, 4/3/white}{};

  \end{tikzpicture}
  \caption{Illustration of how the \texttt{kill} method works. An ensemble of size $N = 5$ is initialized, and the output of two consectutive kill operations are recorded. Dead particles are shaded in light grey.}
  \label{fig:bd-kill}
\end{figure}

\section{Reparameterization Analysis}
\label{app:reparameterization}

\begin{figure}[t]
    \centering

    \begin{subfigure}[t]{0.45\textwidth}
        \centering
        \begin{tikzpicture}
            \begin{axis}[
                domain=-15:15,
                samples=200,
                axis lines=middle,
                height=5.5cm,
                width=\linewidth,
                ymin=0, ymax=6,
                title={$U(y)$},
                legend style={at={(0.97,0.97)}, anchor=north east}
            ]
                \addplot[black, thick, dotted] { ln(1 + x^2) };
                \addplot[black, thick] { x^2/2 };
                \addplot[black, thick, dashed] { x/2 + ln(1 + exp(-x)) - ln(2) };
            \end{axis}
        \end{tikzpicture}
    \end{subfigure}
    \begin{subfigure}[t]{0.45\textwidth}
        \centering
        \begin{tikzpicture}
            \begin{axis}[
                domain=-6:6,
                samples=200,
                axis lines=middle,
                height=5.5cm,
                width=\linewidth,
                title={$\diff[2]{U(y)}{y}$},
                legend style={at={(1.00, 1.00)}, anchor=north east}
            ]
                \addplot[black, thick, dotted] {-2 * (x^2 - 1) / (1 + x^2)^2 };
                \addlegendentry{Cauchy}
                \addplot[black, thick] { 1 };
                \addlegendentry{Gaussian}
                \addplot[black, thick, dashed] { 1 / (1 + cosh(x)) };
                \addlegendentry{Logistic}
            \end{axis}
        \end{tikzpicture}
    \end{subfigure}

    \caption{An illustration of the confining potential and its second derivative for the Cauchy, Gaussian, and Logistic distributions.}
    \label{fig:confining-potentials}
\end{figure}
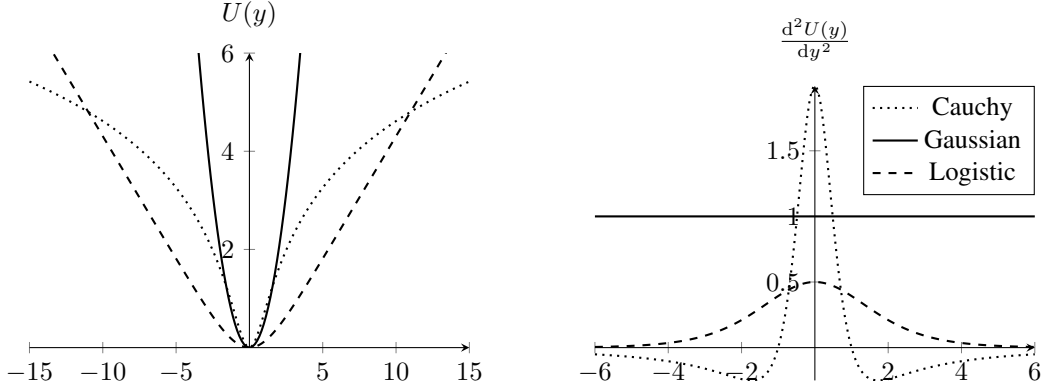

Our goal in this section is to analyze some of the consequences of \cref{thm:reparam}.

\subsection{Extremal Behavior}
From a high level, \cref{thm:reparam} indicates that the correction to the pushforward potential is a confining term which grows in strength close to the boundaries of the hypercube $\hyperc^d$.
Consequently, the gradient of the pushforward exhibits a repulsive force pointing in the opposite direction to the boundary.
This repulsive force assumes different forms depending on the particular choice of random variable (r.v) used to construct the reparameterization.
In this section, we will analize three cases using different r.v's to construct pushforward potentials.
Namely, we pick a sub-Gaussian r.v, a Gaussian r.v, and a r.v with heavier tails than a Gaussian.
We will also discuss what effect the particular choice has on gradient based dynamical systems close to boundary of the hypercube $\hyperc^d$.
\begin{example}[Logistic]
    \label{ex:logistic}
A standard logistic r.v.\ has the CDF  
\begin{equation}
F(y;0,1) = \frac{1}{1 + e^{-y}},
\end{equation}
and is the gold standard for reparameterization strategies in popular libraries such as Stan \cite{carpenter2017stan}, which uses the logit (or log-odds function) to transform coordinates supported on $[0,1]$.  
This correspondence becomes apparent upon observing that the inverse of the logistic CDF is the log-odds function, $F^{-1} = \logit$.
The PDF of the logistic r.v.\ is given by
\begin{equation}
f(y) = \frac{e^{-y}}{(1 + e^{-y})^2},
\end{equation}
where the normalizing constant in this case is unity.
The energy penalty hence takes the form  
\begin{align*}
- \sum_{i=1}^d \log f(y_i) 
&= - \sum_{i=1}^d \left[ -y_i - 2 \log(1 + e^{-y_i}) \right] \\
&= \sum_{i=1}^d \left[ y_i + 2 \log(1 + e^{-y_i}) \right].
\end{align*}
However, we are more interested in the asymptotic behavior of this penalty term.
Observe that
\begin{align*}
\log(1 + e^{-y_i}) 
&\approx 
\begin{cases}
e^{-y_i} & \text{if } y_i \gg 0 \\
- y_i & \text{if } y_i \ll 0.
\end{cases}
\end{align*}
Hence for $y_i \gg 0$, the penalty simplifies to  
\begin{align*}
\sum_{i=1}^d \left[ y_i + 2e^{-y_i} \right] 
&\approx \sum_{i=1}^d y_i.
\end{align*}
Similarly, for $y_i \ll 0$, we get  
\begin{align*}
\sum_{i=1}^d \left[ y_i + 2y_i \right] 
&= - \sum_{i=1}^d y_i.
\end{align*}
Asymptotically, the penalty in both cases can be summarized as $\sum |y_i|$.  
We find that using the logistic r.v.\ to reparameterize the measure is asymptotically equivalent to $\ell_1$ regularization (Lasso \cite{wang2013tikhonov}) of the energy.  
The repulsive correction then takes the form  
\begin{align*}
\partial_{y_i} \left[ - \log f(y_i) \right] 
&= 1 + 2 \cdot \frac{e^{-y_i}}{1 + e^{-y_i}} \cdot (-1) \\
&= 1 - 2 \cdot \frac{1}{1 + e^{y_i}} \\
&= 1 - 2 \cdot \frac{e^{-y_i}}{1 + e^{-y_i}} \\
&= 1 - 2 \left[ 1 - \frac{1}{1 + e^{-y_i}} \right] \\
&= 2F(y_i) - 1.
\end{align*}
Note that the repulsive correction asymptotes to $\pm 1$ at the left/right boundaries respectively. 
In addition, the logistic PDF is known to be sub-Gaussian, and therefore have strongly decaying tails.
This suggests that near the boundaries, since the gradient is dampened by the PDF, that the particles may have more difficulty moving near the boundaries.
\end{example}

\begin{remark}[Numerical stability]
Since we cannot rule out modes near the boundaries, we require our implementation of the reparameterization be robust to evaluations at extreme values.
Otherwise, we can expect $\infty$ and NaN related issues.  
We note that this framework is convenient since stable and tested implementions of the quantile, CDF, and PDF functions for most continous random variables are readily available in standard libraries such as \textsc{SciPy} \cite{virtanen2020scipy} or \textsc{JAX} \cite{bradbury2021jax}.
Therefore most, if not all, of the reparameterization machinery can be implemented without concerns about numerics.
\end{remark}

The logistic r.v.\ is an example of a sub-Gaussian r.v., meaning that the tails decay faster than a Gaussian.
As a natural next step, we consider what happens when the gradient is suppressed by a Gaussian PDF.
\begin{example}[Gaussian]
For a standard Gaussian r.v.\ the energy penalty term is given by  
\begin{align*}
-\sum_{i=1}^d \ln f(y_i) 
&= - \sum_{i=1}^d \ln \left[ \frac{1}{z} e^{- y_i^2 / 2} \right] \\
&= - \sum_{i=1}^d \left[ - \ln z - \frac{y_i^2}{2} \right] \\
&= \ln z^d + \frac{1}{2} \sum_{i=1}^d y_i^2.
\end{align*}
This result has a convenient physical interpretation.
Up to arbitrary constants, the penalty term is a harmonic confining potential. 
As a consequence, particles experience a ``Hooke's law'' type restoring force towards ``equilibrium'' 
\begin{equation}
    -\partial_{y_j} \ln f(y_j) = y_j.
\end{equation}
Alternatively, the penalty can be thought of as an $l_2$ (Tikhonov) regularization term.
The numerical advantage of this choice is that particles are more strongly repelled away from the boundaries.
\end{example}

We have seen an example of a sub-Gaussian and a Gaussian r.v.  
The final case study considers a heavier tailed r.v.
\begin{example}[Cauchy]
For a Cauchy r.v, the PDF is given by
\begin{equation}
    f(y_i) = \frac{1}{\pi} \cdot \frac{1}{1 + y_i^2}.
\end{equation}
Consequently, the energy penalty takes the form
\begin{align*}
    -\sum_{i=1}^d \ln f(y_i) &= -\sum_{i=1}^d \left[ -\ln \pi - \ln(1 + y_i^2) \right] \\
    &= \ln \pi^d + \sum_{i=1}^d \ln(1 + y_i^2).
\end{align*}
Similarly to \cref{ex:logistic}, we are interested in the asymptotic properties of this penalty.
A quick calculation yields the following
\begin{align*}
    \ln(1 + y_i^2) &= \ln \left[ y_i^2 \left(1 + \frac{1}{y_i^2} \right) \right] \\
    &= 2 \ln |y_i| + \ln \left(1 + \frac{1}{y_i^2} \right) \\
    &\sim 2 \ln |y_i|.
\end{align*}
Hence, the Cauchy distribution corresponds to a logarithmic penalty in the energy.  
The corresponding repulsive term is given by
\begin{equation}
    -\partial_{y_i} \ln f(y_i) = \frac{2 y_i}{1 + y_i^2}.
\end{equation}
Such a reparameterization may be useful in situations where modes are expected to be close to the boundaries, since the repulsive force tends to zero asymptotically, and the motion is entirely controlled by the scaled gradient.
\end{example}

\subsection{Hessian Reparameterization}
\label{app:hessian-reparam}
In this paper we utilized a homogeneous (constant) preconditioning matrix that was directly constructed on the dual space.  
It is natural to consider, however, whether one may inherit preconditioners on the dual space space from the hypercube, as this would provide a path to adapting quasi-Newton methods to bounded spaces.
In this section, we provide preliminary results in this direction.

Given \cref{thm:reparam}, we see that the pushforward Hessian takes the form
\begin{equation}
\nabla^2 V_\#(y) = \nabla^2 V(T^{-1}(y)) \odot (\Delta f(y) \otimes \Delta f(y)) + \nabla^2 U(y) + \text{diag} \left( \nabla V(T^{-1}(y)) \odot \Delta \odot f'(y) \right),
\end{equation}
where $\Delta := (\Delta_1, \ldots, \Delta_d)$, $f(y) := (f_1(y_j), \ldots, f_d(y_j))$, and $\odot, \otimes$ denote the Hadamard and outer product operations respectively.  
It is crucial for the preconditioner to be positive definite (p.d) in order to guarantee a descent direction.  
Suppose $\nabla^2 V$ is convex or replaced with some p.d. approximation thereof.  
Then the first term is guaranteed to be p.d. by Schur's product theorem.  
The definiteness of the second term depends on the r.v. used to construct the transformation, as illustrated in \cref{fig:confining-potentials}.  
The final term is a bulk correction whose sign depends on whether the driving and repulsive forces in the pushforward gradient are aligned or anti-aligned.  
This is a consequence of the fact that
\begin{equation}
\partial_j f(x_j) = f(x_j) \, \partial_j \ln f(x_j) = - f(x_j) \, \partial_j U(x_j)
\end{equation}
Because this term depends on $V$ and is negative when the two forces are aligned, it is not trivial to understand what effect it has on the definiteness of the Hessian in general, and what effect modifying this term would have on the dynamics.  
We leave a more detailed analysis of this question for future work.  
However, the following example provides intuition on the pushforward Hessian in a toy case.
\begin{example}[Harmonic]
    Consider the bounded harmonic potential
    \begin{equation}
    V(x) = \frac{1}{2} x^2, \quad x \in [0,1],
    \end{equation}
    which has the convenient property that its Hessian is positive-definite everywhere and equal to unity.  
    Then the pushforward Hessian takes the form
    \begin{align}
    \frac{d^2 V_\#(y)}{dy^2} &= \frac{d^2 V(T^{-1}(y))}{dy^2} \, f^2(y) + \frac{d^2 U(y)}{dy^2} + V'(T^{-1}(y)) \, f'(y) \nonumber \\
    &= f^2(y) + \frac{d^2 U(y)}{dy^2} + T^{-1}(y) f'(y). \label{eq:harmonic-push-pot}
    \end{align}
    We evaluated this expression for the Cauchy, Gaussian, and logistic cases using a CAS, and plotted them in \cref{fig:hessian-reparam}.  
    There are several notable conclusions we can draw.  
    First, the pushforward inherited by the Cauchy case is negative definite for the majority of its support, and in its present form may not be used as a preconditioner.  
    This is in spite of the fact that the Hessian was positive-definite everywhere in the hypercube.  
    Second, we see that the pushforward Hessian approaches zero near the boundaries of the hypercube.  
    A Newton scheme would thus yield larger effective stepsizes near the boundary, but would also run the risk of becoming singular.  
    A damping term appears to be required to avoid this possible numerical instability.  
    Interestingly, the Gaussian case does not appear to suffer from this.  
    Finally, although the alignment term is not strictly positive everywhere, the pushforward Hessian remains positive everywhere in the Gaussian and logistic cases.

\end{example}

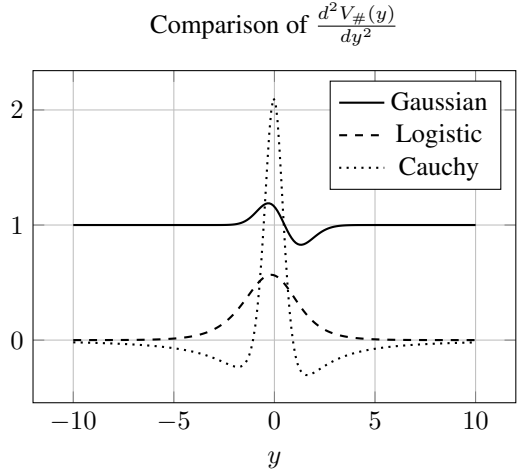
\begin{figure}
    \centering
    \begin{tikzpicture}
    \begin{axis}[
        width=8cm,
        height=6cm,
        xlabel={$y$},
        title={Comparison of $\frac{d^2 V_\#(y)}{dy^2}$},
        legend style={at={(0.97,0.97)}, anchor=north east},
        grid=both
    ]
    
    \addplot [black, thick] table [x index=0, y index=1, col sep=space] {figures/hessian_reparam.dat};
    \addlegendentry{Gaussian}
    
    \addplot [black, thick, dashed] table [x index=0, y index=2, col sep=space] {figures/hessian_reparam.dat};
    \addlegendentry{Logistic}
    
    \addplot [black, thick, dotted] table [x index=0, y index=3, col sep=space] {figures/hessian_reparam.dat};
    \addlegendentry{Cauchy}
    
    \end{axis}
    \end{tikzpicture}
    \caption{Here we plot \cref{eq:harmonic-push-pot} for the Gaussian, Logistic, and Cauchy cases.}
    \label{fig:hessian-reparam}
\end{figure}

\subsection{Relationship to Mirrored Langevin Diffusion}
\label{app:mirror}

Given support space $\chi$, a \textit{mirror map}\footnote{See \cite{shi2021sampling} for additional technical conditions $\psi$ must satisfy.} $\psi$ is chosen such that $\nabla \psi$, the \textit{pushforward}, maps to unbounded Euclidean space (i.e $\nabla \psi(\chi) = \reals^d$).
In the mirrored descent literature \cite{hsieh2020mirroredlangevindynamics,li2021mirrorlangevinalgorithmconverges,shi2021sampling}, $\chi$ is referred to as the \textit{primal} space, and $\reals^d$ is referred to as the \textit{dual space}.
The inverse map, also called the \textit{pullback}, is obtained through $(\nabla \psi)^{-1} = \nabla \psi^*$, where $\psi^*$ is the \textit{Legendre-Fenchel dual}\footnote{Note that this is a generalization of the standard Legendre transform in physics to a larger class of functions (i.e, non-differentiable, etc.), and appears often in convex analysis.} of $\psi$, defined by 
\begin{align}
\psi^*(y) := \sup_{x \in \chi} y^{\top}x - \psi(x).
\end{align}

Given a potential $V:\chi \to \reals$ and an appropriate mirror map $\psi$, the \textit{Mirror Langevin dynamics} \cite{hsieh2020mirroredlangevindynamics} obey the following stochastic differential equation:
\begin{align}
dY_t &= - \nabla V_\#(Y_t) dt + \sqrt{2} dB_t, & X_t &= \nabla \psi^*(Y_t),
\end{align}
where $V_\#$ is the potential of density $p$ under pushforward $\nabla \psi$, given by 
\begin{align}
V_\#(y) = V(x) + \ln \det \nabla^2 \psi(x).
\end{align}

Upon inspection, this formulation of the mirrored Langevin dynamics is identical to running Langevin dynamics on an unbounded reparameterization of the original problem, and samples are collected by applying the pullback $\nabla \psi^*$ \cite[Theorem 1]{hsieh2020mirroredlangevindynamics}.
In this paper, we are particularly interested in the case where $\chi = \hyperc^d$, a hypercube.
We show in the next example that the logistic transformation arises from the sum of negative Shannon entropy.

\begin{example}[Logistic]
    Again, consider the hypercube $\hyperc^d = \bigtimes_{i=1}^d (a_i, b_i)$, and let us define the negative sum of Shannon entropy mirror map
    \begin{equation}
        \label{eq:shannon-mirror}
    \psi(x) = \sum_{i=1}^d \left[ (x_i - a_i)\ln(x_i - a_i) + (b_i - x_i)\ln(b_i - x_i) \right].
    \end{equation}
    
    Then the components of the gradient are given by
    \begin{align*}
    \partial_{x_i} \psi(x) 
    &= \ln(x_i - a_i) + \frac{(x_i - a_i)}{(x_i - a_i)} - \ln(b_i - x_i) + \frac{(b_i - x_i)}{(b_i - x_i)}(-1) \\
    &= \ln\left(\frac{x_i - a_i}{b_i - x_i}\right) \\
    &= \logit\left(\frac{x - a_i}{b_i - a_i}\right) \tag{*} \\
    &= \logit \circ \sigma_i^{-1}(x),
    \end{align*}
    where $\sigma_i(x) = (b_i - a_i)x + a_i$ is the affine transformation introduced in \cref{sec:reparam}, and we have used the fact that    
    \begin{align*}
    \frac{p}{1 - p} = \frac{x - a}{b - x}
    &\Rightarrow \quad p(b - x) = (1 - p)(x - a) \\
    &\Rightarrow \quad p[(b - x) + (x - a)] = x - a \\
    &\Rightarrow \quad p = \frac{x - a}{b - a}.
    \end{align*}
    Hence, the mirror map in \cref{eq:shannon-mirror} leads to the logistic reparameterization.
    The Legendre-Fenchel dual of the mirror map turns out to be
    \begin{align}
    \psi^*(y) = y a - (b-a) \ln \frac{b-a}{a+e^y},
    \end{align}
    and a simple calculation yields
    \begin{align} 
    \nabla \psi^*(y) = \sigma(F(x)),
    \end{align}
    as expected.
    The methods are in complete agreement.
    \end{example}

We highlight that the framework developed in \cref{sec:reparam} augments the theory of mirrored Langevin dynamics, as it makes the behavior of the flow near the boundary clearer, and assists in generalizing techniques such as annealing to bounded spaces.

\subsection{Bounding Birth-death Jumps}
\label{app:bounding-jumps}

To ensure fidelity of the decoupled jump procedure, we should track the number of particles that are jumping, and ensure that it is not a significant fraction of the ensemble.  
The following result illustrates that the expected number of particles that jump in a particular iteration is proportional to the mean absolute deviation of the rates.

\begin{proposition}
Define the random variable
\begin{equation}
    \xi = \sum_{i=1}^N \mathbbm{1}\left\{ r_i < 1 - e^{-|\Lambda_i|c} \right\},
\end{equation}
where each $r_i \sim \text{Unif}[0,1]$ and $c > 0$.  
If $\langle |\Lambda| \rangle c < 2$, then the expected value is bounded by $\mathbb{E}\, \xi \leq N f$, where the fraction $f := \frac{\langle |\Lambda| \rangle c}{2}$.
\end{proposition}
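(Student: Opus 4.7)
The plan is to compute $\mathbb{E}\xi$ exactly via linearity of expectation and then reduce the target bound to a one-dimensional scalar inequality. Since each $r_i \sim \mathrm{Unif}[0,1]$ and $1 - e^{-|\Lambda_i|c} \in [0,1]$, the elementary identity $\Pr(r_i < q) = q$ for $q \in [0,1]$ combined with linearity of expectation yields
\begin{equation*}
\mathbb{E}\xi \;=\; \sum_{i=1}^N \bigl(1 - e^{-|\Lambda_i|c}\bigr).
\end{equation*}
This is the only probabilistic input; the remainder of the argument is deterministic analysis of this sum.

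From this exact expression, the cleanest route to the claimed bound is to exploit the concavity of $\phi(x) := 1 - e^{-x}$ on $[0,\infty)$. Setting $y := \langle|\Lambda|\rangle c$, Jensen's inequality gives
\begin{equation*}
\mathbb{E}\xi \;=\; N \cdot \tfrac{1}{N}\sum_i \phi(|\Lambda_i|c) \;\le\; N\,\phi(y) \;=\; N(1 - e^{-y}),
\end{equation*}
and the target $\mathbb{E}\xi \le Ny/2 = Nf$ reduces to the scalar inequality $1 - e^{-y} \le y/2$ on the regime $y \in [0,2)$ singled out by the hypothesis.

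The main obstacle is precisely this scalar step, since a short Taylor calculation shows $1 - e^{-y} = y - y^2/2 + O(y^3)$, so a pointwise inequality $1 - e^{-y} \le y/2$ in fact fails for small $y$. Two ways to close the gap suggest themselves. First, one can strengthen the Jensen step by using additional structure of the rates coming from \cref{eq:pampel-lambda}: the empirical centering $\sum_i \Lambda_i = 0$ implies $\sum_{i:\Lambda_i>0}\Lambda_i = \sum_{i:\Lambda_i<0}|\Lambda_i| = \tfrac{1}{2}\sum_i|\Lambda_i|$, which may permit a sharper one-sided termwise bound that recovers the factor of two. Alternatively, one can settle for the always-valid consequence of $1 - e^{-x} \le x$, namely $\mathbb{E}\xi \le N\langle|\Lambda|\rangle c$, and absorb the missing factor elsewhere. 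Clarifying which of these routes the argument relies on is where the real work lies; the probabilistic computation itself is immediate.
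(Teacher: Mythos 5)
Your computation of $\mathbb{E}\xi=\sum_i(1-e^{-|\Lambda_i|c})$ and the reduction via concavity/Jensen to the scalar inequality $1-e^{-y}\le y/2$ with $y:=\langle|\Lambda|\rangle c$ mirrors the paper's argument, which instead bounds $1-e^{-x_i}\le x_i/(1+x_i/2)$ termwise and then applies Jensen to that (also concave) function, arriving at $\mathbb{E}\xi\le N\langle x\rangle/(1+\langle x\rangle/2)$. The obstacle you flag at the end is genuine, and you are right not to wave it away: $1-e^{-y}\le y/2$ fails for all $y\in(0,y_*)$ with $y_*\approx 1.59$, and neither of your proposed rescues works. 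In particular the centering $\sum_i\Lambda_i=0$ does not help: take half the rates equal to $+1/c$ and half to $-1/c$, so that $\langle|\Lambda|\rangle c=1<2$ and $f=1/2$, yet $\mathbb{E}\xi=N(1-e^{-1})\approx 0.632\,N>Nf$. This is a counterexample to the proposition as stated.

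The reason you cannot close the gap is that the paper's own final step is in error: passing from $N\langle x\rangle/(1+\langle x\rangle/2)$ to $N\langle x\rangle/2$ requires $1+\langle x\rangle/2\ge 2$, i.e.\ $\langle x\rangle\ge 2$ --- the opposite of the stated restriction $\langle x\rangle<2$. What does survive is exactly your fallback: $\mathbb{E}\xi\le N\langle|\Lambda|\rangle c$ always holds, and the sharper bound $\mathbb{E}\xi\le N\langle|\Lambda|\rangle c/(1+\langle|\Lambda|\rangle c/2)$ follows from the paper's first two (valid) steps; the latter improves on $Nf$ only in the regime $\langle|\Lambda|\rangle c\ge 2$ that the hypothesis excludes. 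So your attempt is incomplete, but the missing step is not recoverable as stated; the proposition must be weakened (drop the factor of $2$, or redefine $f:=\langle|\Lambda|\rangle c/(1+\langle|\Lambda|\rangle c/2)$) before any proof can go through.
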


\begin{proof}
Let $x_i := |\Lambda_i|c$.  
Then
\begin{align*}
    \mathbb{E} \, \xi 
    &= \sum_{i=1}^N \mathbb{E} \, \mathbbm{1}\left\{ r_i < 1 - e^{-x_i} \right\} \\
    &= \sum_{i=1}^N \left(1 - e^{-x_i} \right) \\
    &\leq \sum_{i=1}^N \frac{x_i}{1 + x_i/2} & \text{(Holds for every $x_i > 0$)} \\
    &= N \left\langle \frac{x_i}{1 + x_i/2} \right\rangle \\
    &\leq \frac{N \langle x_i \rangle}{1 + \langle x_i \rangle / 2} & \text{(Reverse Jensen's inequality)} \\
    &\leq \frac{N \langle x_i \rangle}{2} & \text{(restrict to $\langle x_i \rangle < 2$)} \\
    &= N f. \qedhere
\end{align*}
\end{proof}

In practice, the constant $c$ may always be chosen such that this bound holds.  
This provides a useful diagnostic for the birth-death process.  
If $c \ll 1$ must be chosen to suppress the number of jumps, it is likely the rates need to be calculated with a different kernel.

\section{Quasi-Newton}
\label{app:quasi}
Quasi-Newton methods are an exciting direction to investigate further, as they have the potential to dramatically improve convergence of the Langevin dynamics. 
For a nonlinear least squares problem, a well-known approach is the Levenberg-Marquardt algorithm \cite{nocedal1999numerical}, which interpolates between a quasi-Newton and gradient descent type update. 
This method critically depends on the fact that the cost function is of least squares form, as this allows one to decompose the Hessian into a positive definite and a definite matrix. 
The positive definite component is referred to as the Gauss-Newton approximation, and is used in the Levenberg-Marquardt algorithm.

Interestingly, the Hessian of the Whittle likelihood given in \cref{eq:gw-potential} exhibits an analogous decomposition. 
We have:
\begin{align*}
    V(\theta)_{,ij} 
    &= \frac{1}{2} \re \left\langle h(\cdot \,; \theta) - d(\cdot) \mid h(\cdot \,; \theta) - d(\cdot) \right\rangle_{,ij} \\
    &= \re \left\langle h_{,i}(\cdot \,; \theta) \mid h(\cdot \,; \theta) - d(\cdot) \right\rangle_{,j} \\
    &= \re \left\langle h_{,i}(\cdot \,; \theta) \mid h_{,j}(\cdot \,; \theta) \right\rangle + \re \left\langle h_{,ij}(\cdot \,; \theta) \mid h(\cdot \,; \theta) - d(\cdot) \right\rangle
\end{align*}
where we define the first term as the Gauss–Newton approximation of $\nabla^2 V$, which we label $G$. 
The Gauss-Newton approximation $G$ can be shown to be positive definite with a simple calculation:
\begin{align*}
    x_i G_{ij} x_j 
    &= x_i x_j \, \re \left\langle h_{,i}(\cdot \,; \theta) \mid h_{,j}(\cdot \,; \theta) \right\rangle\\
    &= x_i x_j \, \re  \sum_n \frac{h_{,i}(f_n; \theta)^* h_{,j}(f_n; \theta)}{S(f_n)} \Delta f  \\
    &= x_i x_j \, \re \sum_n \left[ \frac{(h_{1,i}(f_n; \theta) - i h_{2,i}(f_n; \theta))(h_{1,j}(f_n; \theta) + i h_{2,j}(f_n; \theta))}{S(f_n)} \right] \Delta f\\
    &= x_i x_j \sum_n \left[ \frac{h_{1,i}(f_n; \theta) h_{1,j}(f_n; \theta) + h_{2,i}(f_n; \theta) h_{2,j}(f_n; \theta)}{S(f_n)} \right] \Delta f> 0
\end{align*}
since $S > 0$ and since the sum of positive definite matrices is positive definite as well.
There is an interesting connection between $G$ and what is known as the Fisher matrix $\Gamma$ in the GWPE literature \cite{vallisneri2008use}. 
Namely, the Fisher matrix is the Gauss-Newton matrix $G$ evaluated at the maximum a posteriori (MAP) point.
That is, $G(\theta_{\text{MAP}}) = \Gamma$.
We leave the investigation of quasi-Newton methods for future work.



\end{document}